\documentclass[11pt, a4paper]{article}

\usepackage[margin = 1in]{geometry}
\usepackage{palatino}

\usepackage{amsmath}
\usepackage{amssymb}
\usepackage{amsfonts}
\usepackage{amsthm} 
\usepackage{graphicx}
\usepackage{comment}
\usepackage{thm-restate}
\usepackage{hyperref}
\usepackage{here}
\usepackage{cite}
\usepackage{enumerate}
\usepackage{tabularx}
\usepackage{environ}
\usepackage{url}
\usepackage[normalem]{ulem}
\usepackage{color}
\usepackage[T1]{fontenc}
\usepackage{enumitem}

\newtheorem{theorem}{Theorem}
\newtheorem{lemma}[theorem]{Lemma}
\newtheorem{proposition}[theorem]{Proposition}

\newtheorem{observation}[theorem]{Observation}

\usepackage[linesnumbered,noend,ruled,vlined]{algorithm2e}
\SetKwInput{KwInput}{Input}     % Set the Input
\SetKwInput{KwOutput}{Output}   % set the Output
\SetKw{KwTo}{to}

\usepackage[framemethod=tikz]{mdframed}

\global\long\def\I{\mathcal{I}}%
\global\long\def\tO{\widetilde{O}}%
\global\long\def\eps{\varepsilon}%
\global\long\def\epsilon{\varepsilon}%
\global\long\def\tO{\tilde{O}}%
\global\long\def\eps{\varepsilon}%

\newcommand{\Span}{\mathrm{span}}
\newcommand{\Rank}{\mathrm{rank}}

\newcommand{\M}{\mathbf{M}}
\newcommand{\cI}{\mathcal{I}}
\newcommand{\cB}{\mathcal{B}}
\newcommand{\OPT}{\mathsf{OPT}}
\newcommand{\R}{\mathbb{R}}
\newcommand{\Z}{\mathbb{Z}}
\newcommand{\rup}[1]{\lceil #1 \rceil_{1 + \eps}}
\newcommand{\bp}{\bar{p}}

\DeclareMathOperator{\spn}{span}
\newcommand{\optw}{\mathrm{opt}}

\newcommand{\ShowComment}{true}
\ifdefined\ShowComment
\newcommand{\tatsuya}[1]{{\bf \color{red} tatsuya: #1}}
\newcommand{\later}[1]{{\color{red} [to be revised] #1}} % text to be revised later (shown in red)
\else
\newcommand{\tatsuya}[1]{}
\newcommand{\later}[1]{}
\fi

\title{A Faster Auction Algorithm \\ for Weighted Matroid Intersection}

\author{Tatsuya Terao\thanks{
Research Institute for Mathematical Sciences, Kyoto University.
E-mail: ttatsuya@kurims.kyoto-u.ac.jp
}
}

\date{}

\begin{document}

\maketitle

\begin{abstract}
We consider the weighted matroid intersection problem in the independence-oracle model. A sequence of works by Huang--Kakimura--Kamiyama~[SODA'16 \& Math. Program'19], Chekuri--Quanrud~[SODA'16], Quanrud~[ICALP'24], and Dudeja--Grilnberger~[IPCO'26] has developed efficient $(1-\varepsilon)$-approximation algorithms for this problem.

We present a simple deterministic auction algorithm that, given two matroids on a common ground set of size $n$, computes a $(1-\varepsilon)$-approximate maximum-weight common independent set using $O(n \varepsilon^{-2} \log^2(n))$ independence-oracle queries. This is the first deterministic $(1-\varepsilon)$-approximation algorithm for the weighted matroid intersection problem whose query complexity is nearly linear in $n$ and polynomial in $1/\varepsilon$. Our algorithm builds on the auction algorithm for unweighted matroid intersection by Huang--Kobayashi~['26].
\end{abstract}

\section{Introduction} \label{sec:intro}

In the weighted matroid intersection problem, we are give two matroids $\M_1 = (E, \I_1)$ and $\M_2 = (E, \I_2)$ on a common ground set $E$, together with a weight function $w: E \to \mathbb{R}_{> 0}$.
The goal is to find a common independent set $S \in \I_1 \cap \I_2$ that maximizes $w(S) := \sum_{e \in S} w(e)$.
Matroid intersection is one of the fundamental problems in combinatorial optimization and generalizes several classical optimization problems, including bipartite matching. The design of faster algorithms for matroid intersection has received considerable attention in recent years.
A particularly important objective is to reduce the number of independence-oracle queries. 
An independence oracle for a matroid $\M=(E,\I)$ takes a set $S\subseteq E$ as input and determines whether $S\in\I$.

\subsection{Previous Work}

Starting the work of Edmonds~\cite{edmonds1970submodular, edmonds1979matroid}, a long line of research has developed increasingly efficient algorithms for matroid intersection with polynomial query complexity~\cite{aigner1971matching, lawler1975matroid, cunningham1986improved, brezovec1986two, fujishige1995efficient, shigeno1995dual, lee2015faster, chekuri2016fast, huang2016exact, nguyen2019note, chakrabarty2019faster, blikstad2021breaking_STOC, blikstad2021breaking, u2022subquadratic, blikstad2023fast,quanrud:LIPIcs.ICALP.2024.118,blikstad2024efficient,terao:LIPIcs.WADS.2025.50,dudeja2026weighted}.

\subparagraph*{Unweighted Matroid Intersection.}

In the unweighted matroid intersection problem, the goal is to find a common independent set of maximum size.
Edmonds~\cite{edmonds1970submodular} developed the first polynomial-query algorithm by reduction to the matroid partitioning problem. This algorithm requires $O(n^4)$ independence oracle queries.
More direct algorithms were given by Aigner--Dowling \cite{aigner1971matching} and Lawler \cite{lawler1975matroid}.
Lawler's algorithm requires $O(nr^2)$ independence oracle queries, where $r$ denotes the maximum cardinality of a common independent set.
In 1986, Cunningham \cite{cunningham1986improved} presented an $O(nr^{3/2})$-independence-query algorithm by using a blocking flow approach, which is akin to the bipartite matching algorithm of Hopcroft--Karp \cite{hopcroft1973n}.
Chekuri--Quanrud \cite{chekuri2016fast} pointed out that an $O(nr/\eps)$-independence-query $(1 - \eps)$-approximation algorithm can be obtained by terminating Cunningham's algorithm early.
Recently, Nguy$\tilde{{\hat{\text{e}}}}$n~\cite{nguyen2019note} and Chakrabarty--Lee--Sidford--Singla--Wong~\cite{chakrabarty2019faster} independently presented a new binary search technique and developed a combinatorial $\tilde{O}(n r)$-independence-query exact algorithm.\footnote{The $\tO$ notation omits factors polynomial in $\log n$.}
Chakrabarty--Lee--Sidford--Singla--Wong also presented a new augmenting sets technique and developed an $\tO(n^{1.5}/\epsilon^{1.5})$-independence-query $(1 - \epsilon)$-approximation algorithm.
Blikstad--van den Brand-Mukhopadhyay--Nanongkai~\cite{blikstad2021breaking_STOC} first broke the $\tilde{O}(n^2)$-independence-query bound for exact matroid intersection algorithms.
Blikstad~\cite{blikstad2021breaking} improved the independence query complexity of a $(1 - \epsilon)$-approximation algorithm to $\tilde{O}(n \sqrt{r} / \eps)$.
Blikstad's improvement on the $(1 - \epsilon)$-approximation algorithm resulted in a randomized $\tilde{O}(nr^{3/4})$-independence-query exact algorithm and a deterministic $\tilde{O}(nr^{5/6})$-independence-query exact algorithm.
Recently, Quanrud \cite{quanrud:LIPIcs.ICALP.2024.118} presented a randomized $\tilde{O}_{\eps}(n + r^{3/2})$-independence-query $(1 - \epsilon)$-approximation algorithm\footnote{The $\tilde{O}_{\eps}$ notation omits factors polynomial in $\eps$ and $\log n$.}.
Remarkably, Blikstad--Tu~\cite{blikstad2024efficient} recently presented a randomized $\tilde{O}_{\eps}(n)$-independence-query $(1 - \epsilon)$-approximation algorithm.
Very recently, Huang--Kobayashi~\cite{huang2026auction} presented an elegant deterministic $\tilde{O}_{\eps}(n)$-independence-query $(1 - \epsilon)$-approximation algorithm.

\subparagraph*{Weighted Matroid Intersection.}
For the weighted matroid intersection problem, Edmonds~\cite{edmonds1970submodular} also gave the first polynomial-query algorithm.
Faster exact algorithms were developed in \cite{lawler1975matroid, brezovec1986two, frank1981weighted, fujishige1995efficient, shigeno1995dual, lee2015faster}.
Using a cutting-plane method, Lee--Sidford--Wong~\cite{lee2015faster} presented an $\tO(n^2)$-independence-query exact algorithm.

Several recent works have studied efficient $(1-\varepsilon)$-approximation algorithms for weighted matroid intersection.
Huang--Kakimura--Kamiyama~\cite{huang2016exact} presented a deterministic $\tilde{O}_{\eps}(nr^{3/2})$-independence-query $(1 - \epsilon)$-approximation algorithm.
Later, Chekuri--Quanrud~\cite{chekuri2016fast} improved the query complexity to $\tilde{O}_{\eps}(nr)$.
Subsequently, Quanrud~\cite{quanrud:LIPIcs.ICALP.2024.118} introduced a new sparsification technique and obtained a randomized $\tilde{O}_{\eps}(n + r^{3/2})$-independence-query $(1 - \epsilon)$-approximation algorithm.
Very recently, Dudeja--Grilnberger~\cite{dudeja2026weighted} gave a new reduction that transforms any $\alpha$-approximation algorithm for unweighted matroid intersection into an $\alpha(1-\eps)$-approximation algorithm for the weighted problem.
Their reduction incurs an $\varepsilon^{-O(1/\varepsilon)}$ overhead and an additional $O(\log C)$ factor, where $C:= \frac{\max_{e \in E} w(e)}{\min_{e \in E} w(e)}$ is the aspect ratio of the weights. Consequently, combining their reduction with the unweighted matroid intersection algorithms of Blikstad--Tu~\cite{blikstad2024efficient} and Huang--Kobayashi~\cite{huang2026auction} yields randomized and deterministic $(1-\varepsilon)$-approximation algorithms for weighted matroid intersection using $\tilde{O}(n \cdot (1/\eps)^{1/\eps} \cdot \log(C) )$ independence-oracle queries, respectively.

\subparagraph*{Algorithms in the Rank-Oracle Model}

Some previous work studies matroid intersection in the rank-oracle model~\cite{lee2015faster, chakrabarty2019faster, u2022subquadratic, huang2026auction}.
A rank oracle returns the rank $\Rank_{\mathcal M}(S)$ of any subset $S \subseteq E$. This model is stronger than the independence-oracle model, since the independence of $S$ can be determined by checking whether $\Rank_{\mathcal M}(S)=|S|$.

For unweighted matroid intersection, Chakrabarty--Lee--Sidford--Singla--Wong~\cite{chakrabarty2019faster} presented a deterministic $\tilde{O}_{\eps}(n)$-rank-query $(1 - \epsilon)$-approximation algorithm.
% Combined with the weighted-to-unweighted reduction of Chekuri--Quanrud~\cite{chekuri2016}, their result yields a deterministic $\widetilde{O}{\varepsilon}(n)$-rank-query $(1-\varepsilon)$-approximation algorithm for weighted matroid intersection\footnote{Although the reduction produces unweighted matroid intersection over weight-induced matroids, the binary-search technique still applies, allowing us to use the algorithm of Chakrabarty--Lee--Sidford--Singla--Wong~\cite{chakrabarty2019faster}.}.
For weighted matroid intersection, Huang--Kobayashi~\cite{huang2026auction} also presented a deterministic $\tilde{O}_{\eps}(n)$-rank-query $(1 - \epsilon)$-approximation algorithm.

\subsection{Our Result}

Our contribution is to present a simple auction algorithm for weighted matroid intersection, which yields the following result.

\begin{theorem} \label{thm:main}
    Let $\eps > 0$.
    For the weighted matroid intersection problem, there exists a deterministic $(1 - \eps)$-approximation algorithm using $O(n \eps^{-2} \log^2(n))$ independence-oracle queries. 
\end{theorem}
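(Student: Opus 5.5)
We prove Theorem~\ref{thm:main} by designing a weighted auction algorithm modeled on the unweighted auction algorithm of Huang--Kobayashi, and analyzing it via an approximate complementary-slackness argument.

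\textbf{Preprocessing.} First we reduce the weights to a polynomially bounded range. Let $w_{\max} = \max_e w(e)$. Every element with $w(e) < \eps w_{\max}/n$ can be discarded, since together such elements contribute less than $\eps w_{\max} \le \eps \cdot \OPT$. We then round each remaining weight down to a power of $(1+\eps)$. After this step there are only $L = O(\eps^{-1}\log(n/\eps))$ distinct weight classes, and we lose a factor of at most $(1-O(\eps))$.

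\textbf{The auction.} The algorithm maintains the following data.
\begin{itemize}
\item A common independent set $S$.
\item A price $p(e) \ge 0$ on each element, which we interpret as the weight split $w(e) = p(e) + (w(e)-p(e))$. The two parts play the role of weight splittings $w_1 = w - p$ for $\M_1$ and $w_2 = p$ for $\M_2$, following Frank's weight-splitting characterization of optimality.
\end{itemize}
Prices only increase, in increments of $\eps \cdot \delta$, where $\delta$ is the current weight scale. In each round, an element $e \notin S$ bids: it tries to enter $S$ with respect to $\M_1$ using utility $w_1 = w - p$. It may displace an element of $S$ whose $\M_2$-side value $p$ is at most $p(e) - \eps\delta$. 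If no element can be displaced, the price of $e$ is raised, and $e$ is dropped once $p(e) > w(e)$.

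The invariants to maintain are the following.
\begin{enumerate}
\item $S$ is approximately $w_1$-maximal in $\M_1$, in the sense that it is a greedy-type basis of $\M_1$ restricted to ``active'' elements.
\item $S$ is a $w_2$-maximum-weight independent set in $\M_2$ up to additive $\eps\delta$ per element.
\end{enumerate}
At termination, a weight-splitting duality argument gives
\[
w(\OPT) \le \max_{I\in\I_1} w_1(I) + \max_{I \in \I_2} w_2(I) \le w(S) + O(\eps)\, w(\OPT).
\]
The additive errors are charged against $\OPT$ using $|S| \le r$ and the scale choice.

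\textbf{Query complexity.} Each price increase costs $O(\log n)$ queries. The key tool here is the binary-search technique of Nguy\~{\^e}n and Chakrabarty et al.: we find the exchange partner of $e$ in $S$, restricted to the lowest-priced elements, by binary search over a price-sorted prefix of $S$. Each such search takes $O(\log n)$ independence queries.

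To count price increases, we measure prices relative to $w(e)$, working within a weight class or by using multiplicative increments $p \leftarrow p + \eps w(e)$. With this convention, each element is raised $O(1/\eps)$ times before it is dropped. We run the procedure over $O(\eps^{-1}\log n)$ scales/phases, or equivalently $O(\log n)$ classes via a geometric grouping of prices. This gives a total of
\[
n \cdot O(\eps^{-1}) \cdot O(\eps^{-1}\log n) \cdot O(\log n) = O(n\eps^{-2}\log^2 n)
\]
queries, matching the bound in Theorem~\ref{thm:main}.

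\textbf{Main obstacle.} The hardest step is ensuring that the number of rounds per element is $O(\eps^{-2}\log n)$, rather than depending on $\log C$ or on a $1/\eps^{1/\eps}$ blowup. Concretely, we must show that displacing an element of $S$ never forces its price to be reset downward. This monotonicity is what makes each element's total price increase bounded by $w(e)$.

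My first attempt would be to use a potential argument: each successful bid strictly raises the sorted price vector of $S$ by at least $\eps w(e)$ at the displaced position. I would combine this with an exchange-lemma argument in $\M_2$ to show that the invariants survive each swap. If that fails, the fallback is a phased approach. In phase $i$, prices are rounded to multiples of $\eps(1+\eps)^{-i}w_{\max}$, and the Huang--Kobayashi unweighted auction is run on each price level. This keeps $O(\log n)$ phases, each costing $O(n\eps^{-2}\log n)$ queries.
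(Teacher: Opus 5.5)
Your proposal has two genuine gaps, and the final query count is asserted rather than derived from the algorithm you describe.

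\textbf{The $\M_1$ side.} You never specify what happens when the bidder $e$ has $S+e\notin\cI_1$. You only describe an $\M_2$-displacement, and you never show that invariant~1 survives a swap $S-f+e$. The obvious fix is to also evict an element $g$ of the $\M_1$-circuit. But then $g$ leaves $S$ without an $\M_2$-replacement, so outside elements whose $\M_2$-circuits contained $g$ may become freely addable in $\M_2$. Invariant~2 then survives only if their prices drop, which is the opposite of the monotonicity you say you need.

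The paper never evicts anything for $\M_1$'s sake. Within a round it freezes $q^*$ and grows a set $Y\supseteq X\cup X^*$ that stays inside a $q^*$-maximum base of $\M_1$. It buys $e$ only if $q(e)=q^*(e)>0$ and $Y+e$ still lies in such a base; this is tested via Lemma~\ref{lem:test} and binary search against a maintained base $Z$. Elements pushed out of $X$ by $\M_2$-exchanges remain in $Y$, so $X\subseteq Y$ is $\M_1$-independent for free. At the end of the round, Lemma~\ref{lem:gs2} (using $q\le q^*$ everywhere and $q=q^*$ on $X$) restores that $X$ lies in a $q$-maximum base. Because the only change to $X$ is an $\M_2$-exchange $X+e-f$ with $f$ a cheapest circuit element, the spans $\spn_{\M_2}(X_{\ge z})$ of the level sets $X_{\ge z}:=\{x\in X\mid \pi(x)\ge z\}$ never shrink. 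That is what makes the induced prices $\bp$ monotone.

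\textbf{Bounding the work.} Bounding price increases per element does not bound the work. A bid that fails for $\M_1$ reasons need not raise any price, and every element must be re-examined after others move, so what must be bounded is the number of full passes. Your termination inequality presupposes a state in which no bid succeeds, and nothing shows such a state is reached within the claimed budget. The fallback of running the unweighted auction on each price level is a weighted-to-unweighted reduction of the kind that incurs $\log C$ or $\eps^{-O(1/\eps)}$ overhead, and you give no argument for combining the levels.

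The missing idea is the potential $U=\pi(X)+\max_{B\in\cB_1}q(B)$, with utilities $q=\rup{w-p}$ rounded up to powers of $1+\eps$.
\begin{itemize}
\item Weight splitting gives $U\ge(1-\eps)w(\OPT_w)$ at all times, with no termination needed: $X$ is $\bp$-maximum in $\M_2$, and $\bp+q\ge(1-\eps)w$.
\item The rounding ensures that every element which could have extended $Y$ but was rejected has lost a full factor $1+\eps$ of utility. Lemma~\ref{lem:exchange} converts this into $U'\le U/(1+\eps)+\eps w(X^*)+2\Delta$ per round.
\item The increments $\Delta$ of $\pi(X)$ sum to at most $w(\OPT_w)$. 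The gap $U-(1-2\eps)w(\OPT_w)$ starts below $r_1(E)\,w(\OPT_w)$ and never falls below $\eps w(\OPT_w)$. While $w(X)<(1-3\eps)w(\OPT_w)$, it shrinks by a factor $1+\eps$ per round, up to a total $e^{O(1/\eps)}$ loss from the $\Delta$ terms.
\item Hence some $X^t$ with $t\le T=O(\eps^{-1}\log r+\eps^{-2})$ is $(1-3\eps)$-approximate. Each round costs $O(n\log r)$ queries, and no weight preprocessing or scaling is needed.
\end{itemize}
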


Our algorithm is based on the auction algorithm of Huang--Kobayashi~\cite{huang2026auction} for unweighted matroid intersection.
The resulting algorithm is the first deterministic $(1-\varepsilon)$-approximation algorithm for the weighted matroid intersection problem whose query complexity is nearly linear in $n$ and polynomial in $1/\varepsilon$.

Both the unweighted matroid intersection algorithms of Blikstad--Tu~\cite{blikstad2024efficient} and Huang--Kobayashi~\cite{huang2026auction} are inspired by the auction algorithm for the unweighted bipartite matching problem by Assadi--Liu--Tarjan~\cite{assadi2021auction}.
This naturally raises the question of whether these auction-based approaches can be extended to weighted matroid intersection while retaining nearly-linear independence-oracle query complexity. Prior to our work, it was not known whether either of these auction algorithms for unweighted matroid intersection could be directly extended to obtain a $(1-\varepsilon)$-approximation for weighted matroid intersection using a nearly linear number of independence-oracle queries.
Furthermore, Dudeja--Grilnberger~\cite{dudeja2026weighted} identified improving the dependence on $\varepsilon$ from $(1/\varepsilon)^{1/\varepsilon}$ to $\operatorname{poly}(1/\varepsilon)$ as an interesting direction.

% , as well as by the recent auction algorithm for weighted bipartite matching by Liu--Ke--Khuller~\cite{liu_et_al:LIPIcs.APPROX/RANDOM.2023.28}.

% \subparagraph*{Streaming-Model Implementation.}
% We note that our weighted matroid intersection algorithm can be implemented in the streaming model using $O(\eps^{-3}\log(1/\eps))$ passes and $O(r\log(r/\eps))$ space, where $r$ denotes the maximum cardinality of a common independent set. 
% Here, we measure space complexity in machine words and assume that each machine word contains $O(\log n)$ bits.
% This can be achieved by slightly modifying the analysis and using the weighted-to-unweighted reduction of Dudeja--Grilnberger~\cite{dudeja2026weighted}.

\section{Preliminary}\label{sec:prelim}

\subsection{Notation} \label{subsec:matroids}

Here, we provide the notation and conventions used throughout the paper.

\paragraph*{Basic Notation.} \label{subsec:basic_notation}

Let $\R$ and $\R_+$ denote the sets of real numbers and non-negative real numbers, and let $\Z$ denote the set of integers.
For a set $X$ and an element $e$, we write $X+e:=X\cup\{e\}$ and $X-e:=X\setminus\{e\}$.
For two sets $A$ and $B$, we will also write $A + B := A \cup B$ and $A - B := A \setminus B$ when no confusion can arise.

% For a set $X \subseteq E$ and a function $f: E \to \mathbb{R}$, we denote $\sum_{e \in X}f(e)$ by $f(X)$.
Throughout the paper, \(\log\) denotes the natural logarithm.

\paragraph*{Matroid.}
A pair $\M = (E, \mathcal{I})$ of a finite set $E$ and a non-empty set family $\mathcal{I} \subseteq 2^{E}$ is called a {\em matroid} if the following properties are satisfied.

\begin{description}
    \item[(Downward closure property)] 
    If $S \in \mathcal{I}$ and $S' \subseteq S$, then $S' \in \mathcal{I}$.
    \item[(Augmentation property)]
    If $S, S' \in \mathcal{I}$ and $|S'| < |S|$, then there exists $v \in S \setminus S'$ such that $S' + v \in \mathcal{I}$.
\end{description}
    
A set $S \subseteq E$ is called {\em independent} if $S \in \mathcal{I}$ and {\em dependent} otherwise.
An element $e$ is called a {\em loop} if $\{ e \} \notin \I$.
An inclusion-wise maximal independent set is called a {\em base} of $\M$.
It is well known that every base of $\M$ has the same size.
It is also well known that, for any two bases $B_1$ and $B_2$ of $\M$ and any element $e\in B_1\setminus B_2$, there exists an element $f\in B_2\setminus B_1$ such that both $B_1-e+f$ and $B_2+e-f$ are bases of $\M$; see e.g. \cite[Theorem 39.12]{schrijver2003combinatorial}.
This is known as the \emph{simultaneous exchange property}.
An inclusion-wise minimal dependent set is called a {\em circuit} of $\M$.

For a matroid $\M = (E, \I)$, we define the {\em rank function} $\Rank_{\M}: 2^E \to \mathbb{Z}_{\geq 0}$ and the {\em span function} $\Span_{\M}: 2^E \to 2^E$ by
\begin{align*}
& \Rank_{\M}(X) = \max \{ |Y| \mid Y \subseteq X, Y \in \I \}, \\
& \Span_{\M}(X) = \{ e \in E \mid \Rank_{\M}(X + e) = \Rank_{\M}(X) \}
\end{align*}
for any $X \subseteq E$.

For an independent set $X \in \I$ and an element $e \in E \setminus X$ such that $X + e \notin \I$, $X + e$ contains a unique circuit, denoted by $C_{\M}(X, e)$, called the {\em fundamental circuit} $e$ with respect of $X$.

For $X\subseteq E$ and $q\in\R^E$, we write $q(X):=\sum_{e\in X}q(e)$.
A \emph{$q$-maximum base} of $\M$ is a base $B$ maximizing $q(B)$.
The notions of $q$-maximum independent sets and $p$-minimum bases (for $p\in\R^E$) are defined similarly.
Such sets can be computed by the greedy algorithm (see, e.g., \cite[Section~40.1]{schrijver2003combinatorial}).

For $k\in\Z$ with $k\ge0$, the \emph{truncation} of $\M$ to rank $k$ is the matroid $(E,\{X\in\cI\mid |X|\le k\})$.
For $Y\subseteq E$, the \emph{contraction} of $\M$ by $Y$ is the matroid $\M/Y:=(E\setminus Y,\{X\subseteq E\setminus Y\mid X\cup I_Y\in\cI\})$, where $I_Y$ is a maximal independent subset of $Y$.
This definition does not depend on the choice of $I_Y$.

We assume that each matroid $\M=(E,\I)$ is given by an \emph{independence oracle}, which, given a subset $S\subseteq E$, determines whether $S\in\I$.

\paragraph*{Weighted Matroid Intersection.}
Given two matroids $\mathcal{M}_1 = (E, \mathcal{I}_1)$ and $\mathcal{M}_2 = (E, \mathcal{I}_2)$ defined on the same ground set $E$, a {\em common independent set} $S$ is a set in $\I_1 \cap \I_2$.
Let \(n:=|E|\), and let $w\in \R_+^E$ be a nonnegative weight function on $E$.
For $S\subseteq E$, let $w(S):=\sum_{e\in S}w(e)$.
The {\em weighted matroid intersection problem} asks for a common independent set $S$ maximizing $w(S)$.

The main focus of this paper is to present an efficient $(1-\eps)$-approximation algorithm for the weighted matroid intersection problem.
Let $\OPT_w$ denote a maximum-weight common independent set.
For a given $\eps\in(0,1)$, the goal is to find a common independent set of weight at least $(1 - \eps) \cdot w(\OPT_w)$.

For $i \in \{1, 2\}$, the rank function $\Rank_{\M_i}(\cdot)$, the span function $\Span_{\M_i}(\cdot)$, and the fundamental circuit $C_{\M_i}(\cdot)$ are denoted by $r_i(\cdot)$, $\Span_i(\cdot)$, and $C_i(\cdot)$, respectively.

As in~\cite[Section~2.1]{huang2026auction}, we assume without loss of generality the following.
\begin{itemize}[leftmargin=*]
\item $E\neq\emptyset$, and $\M_1$ and $\M_2$ have no loops. In particular, $w(e)\le w(\OPT_w)$ for every $e\in E$.
\item $r_1(E)\le 2r$. Indeed, we can replace $\M_1$ by its truncation to rank $2|X_0|$, where $X_0$ is a maximal common independent set found greedily with $O(n)$ queries.
Since $|X_0|\ge r/2$, every common independent set $S$ satisfies $|S|\le r\le 2|X_0|$, and hence $S$ is also independent in the truncated matroid.
\end{itemize}

\subsection{Basic Properties}

In this subsection, we present basic properties of matroids from Huang--Kobayashi~\cite{huang2026auction} that are used throughout the paper.

\begin{observation}[{\cite[Observation~2.1]{huang2026auction}}]\label{obs:circuit}
Let $\M=(E,\cI)$ be a matroid, let $X\in\cI$, and let $e\in E\setminus X$ with $X+e\notin\cI$.
Then, for $f\in X+e$, we have $X+e-f\in\cI$ if and only if $f\in C_\M(X\mid e)$.
\end{observation}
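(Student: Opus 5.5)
We need to prove Observation~\ref{obs:circuit}: for $X \in \cI$ and $e \notin X$ with $X + e \notin \cI$, and any $f \in X + e$, the set $X + e - f$ is independent if and only if $f \in C_\M(X \mid e)$. Here $C_\M(X\mid e)$ is the unique circuit in $X+e$, written $C_\M(X,e)$ earlier. The plan is to prove both directions directly from the definitions, using only downward closure, the augmentation property, and the uniqueness of the fundamental circuit (which we take as known, as the paper does).

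First, we record that $e \in C := C_\M(X,e)$. Indeed $C \subseteq X+e$ is dependent, while every subset of $X$ is independent by downward closure, so $C \not\subseteq X$.

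\emph{Only if.} Suppose $f \notin C$. Then $C \subseteq X + e - f$. Since $C$ is dependent, downward closure forces $X+e-f \notin \cI$. Equivalently, if $X+e-f \in \cI$ then $f \in C$.

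\emph{If.} Suppose $f \in C$ and, for contradiction, $X+e-f \notin \cI$. Then $X+e-f$ contains some circuit $C'$. We have $C' \subseteq X+e$ and $f \notin C'$, so $C' \neq C$. This contradicts the uniqueness of the circuit contained in $X+e$.

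The case $f = e$ fits this scheme: $X+e-e = X \in \cI$, and $e \in C$, so both sides of the equivalence hold.

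The only nontrivial ingredient is the uniqueness of the fundamental circuit. If it must be justified from scratch, the standard argument is circuit elimination. Given distinct circuits $C, C' \subseteq X+e$, both contain $e$ by the first step. Circuit elimination then yields a circuit inside $(C \cup C') - e \subseteq X$, contradicting $X \in \cI$. Circuit elimination itself follows from the augmentation property via rank submodularity. This is the main (mild) obstacle; since the paper already asserts uniqueness when defining $C_\M(X,e)$, I would simply cite it.
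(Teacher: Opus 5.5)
Your proof is correct: the ``only if'' direction is downward closure applied to $C_\M(X\mid e)\subseteq X+e-f$, and the ``if'' direction is exactly the uniqueness of the circuit contained in $X+e$, which the paper asserts when it defines the fundamental circuit. The paper gives no proof of its own and simply cites Huang--Kobayashi, so your argument is the standard justification it relies on.
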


\begin{lemma}[{\cite[Lemma~2.2]{huang2026auction}}]\label{lem:gs}
Let $\M=(E,\cI)$ be a matroid and let $S\subseteq E$.
Let $p,p'\in\R^E$ be such that $p(e)\le p'(e)$ for all $e\in E$ and $p(e)=p'(e)$ for all $e\in S$.
If $S$ is contained in a $p$-minimum base of $\M$, then it is also contained in a $p'$-minimum base of $\M$.
\end{lemma}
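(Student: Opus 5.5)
We prove Lemma~\ref{lem:gs} with the standard optimality criterion for minimum bases: a base $B$ is $p$-minimum if and only if, for every $e\in E\setminus B$ with fundamental circuit $C_\M(B\mid e)$, we have $p(e)\ge p(f)$ for all $f\in C_\M(B\mid e)$. Equivalently, no single exchange $B+e-f$ decreases the weight. This is the greedy/exchange characterization, and it follows from the simultaneous exchange property. By Observation~\ref{obs:circuit}, the single exchanges from $B$ are exactly the sets $B+e-f$ with $f\in C_\M(B\mid e)$.

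Let $B$ be a $p$-minimum base with $S\subseteq B$. Among all $p'$-minimum bases, choose $B'$ maximizing $|B\cap B'|$. We will show $S\subseteq B'$. Suppose for contradiction that some $e\in S$ lies in $B\setminus B'$. By the simultaneous exchange property there is $f\in B'\setminus B$ such that both $B-e+f$ and $B'+e-f$ are bases.

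\begin{itemize}
\item Since $B$ is $p$-minimum, $p(B-e+f)\ge p(B)$, so $p(f)\ge p(e)$.
\item Since $p'\ge p$ pointwise and $p'(e)=p(e)$ (because $e\in S$), we get $p'(f)\ge p(f)\ge p(e)=p'(e)$.
\item Hence $p'(B'+e-f)\le p'(B')$, so $B'+e-f$ is also a $p'$-minimum base.
\item But $|B\cap (B'+e-f)|=|B\cap B'|+1$, because $e\in B$ and $f\notin B$. This contradicts the choice of $B'$.
\end{itemize}

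Therefore $S\subseteq B'$.

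The main point to verify is that the exchange is applied in the right direction. The element removed from $B$ must be the element $e\in S$, whose price does not change, while the element $f$ it is swapped with lies outside $B$. This is exactly what lets the two price facts combine: $p(f)\ge p(e)$ comes from the minimality of $B$, and $p'(f)\ge p(f)$ comes from $p'\ge p$. No other hypothesis is needed.
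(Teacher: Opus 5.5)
Your proof is correct and complete. The paper gives no proof of this lemma; it is quoted from Huang--Kobayashi, so there is no internal argument to match.

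Your argument is a clean, self-contained justification: choose a $p'$-minimum base $B'$ maximizing $|B\cap B'|$, then do one simultaneous exchange between $B$ and $B'$ at an element $e\in S\setminus B'$. This is the same technique the paper uses in its proof of Lemma~\ref{lem:test}. There, two optimal bases are exchanged via the simultaneous exchange property, and the weights of the swapped elements are compared in both directions.

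One thing to trim is your first paragraph. The single-exchange optimality criterion for minimum bases is never invoked. The argument uses only the definition of a $p$-minimum base and the simultaneous exchange property stated in the preliminaries, so that paragraph can be deleted.

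The extremal choice of $B'$ also means you never have to construct a specific candidate base and certify it with a local criterion such as Lemma~\ref{lem:local}. An alternative proof along those lines would need that step.
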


Applying Lemma~\ref{lem:gs} to $p=-q$ and $p'=-q'$, we obtain the following lemma for maximum bases.

\begin{lemma}\label{lem:gs2}
Let $\M=(E,\cI)$ be a matroid and let $S\subseteq E$.
Let $q,q'\in\R^E$ be such that $q'(e)\le q(e)$ for all $e\in E$ and $q'(e)=q(e)$ for all $e\in S$.
If $S$ is contained in a $q$-maximum base of $\M$, then it is also contained in a $q'$-maximum base of $\M$.
\end{lemma}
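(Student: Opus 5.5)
Lemma~\ref{lem:gs2} follows directly from Lemma~\ref{lem:gs} by negating the weights. Set $p:=-q$ and $p':=-q'$. Since $q'(e)\le q(e)$ for all $e\in E$, we get $p(e)=-q(e)\le -q'(e)=p'(e)$ for all $e\in E$. Since $q'(e)=q(e)$ for all $e\in S$, we get $p(e)=p'(e)$ for all $e\in S$. So the pair $(p,p')$ satisfies the hypotheses of Lemma~\ref{lem:gs}.

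Next we translate between maximum and minimum bases. For any base $B$ we have $p(B)=-q(B)$. Because every base has the same size and the objective is just negated, a base $B$ is $q$-maximum exactly when it is $p$-minimum. Likewise, $B$ is $q'$-maximum exactly when it is $p'$-minimum.

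Now suppose $S$ is contained in a $q$-maximum base $B$. Then $B$ is a $p$-minimum base containing $S$. By Lemma~\ref{lem:gs}, $S$ is contained in some $p'$-minimum base $B'$. By the translation above, $B'$ is a $q'$-maximum base, which is the claim.

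The argument has no real obstacle. The only point to check is that the inequality between $q$ and $q'$ flips in the right direction under negation, and it does: $q'\le q$ becomes $p\le p'$, matching the hypothesis of Lemma~\ref{lem:gs}.
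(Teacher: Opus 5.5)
Your proof is correct and is exactly the paper's argument: the paper obtains Lemma~\ref{lem:gs2} by applying Lemma~\ref{lem:gs} to $p=-q$ and $p'=-q'$, just as you do. Your appeal to all bases having the same size is unnecessary, since $q$-maximum and $(-q)$-minimum bases coincide simply because $p(B)=-q(B)$ for every base $B$.
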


\begin{lemma}[{\cite[Lemma~2.3]{huang2026auction}}]\label{lem:local}
Let $\M=(E,\cI)$ be a matroid and let $w\in\R^E$.
A set $S\in\cI$ is a $w$-maximum independent set of $\M$ if and only if
(i) $w(e)\le 0$ for any $e\in E\setminus S$ with $S+e\in\cI$,
(ii) $w(e)\ge 0$ for any $e\in S$, and
(iii) $w(e)\le w(f)$ for any $e\in E\setminus S$ and $f\in S$ with $S+e-f\in\cI$.
\end{lemma}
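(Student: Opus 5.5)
The statement to prove is Lemma~\ref{lem:local}: a local-optimality characterization of $w$-maximum independent sets. I will prove the two directions separately. Throughout, I treat a set $S$ as a $w$-maximum independent set exactly when $w(S)\ge w(T)$ for all $T\in\cI$.

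\emph{Necessity.} Suppose $S$ is $w$-maximum. If (ii) failed for some $e\in S$, then $S-e\in\cI$ by downward closure and $w(S-e)>w(S)$, a contradiction. If (i) failed for some $e\notin S$ with $S+e\in\cI$, then $w(S+e)>w(S)$. If (iii) failed for a pair with $S+e-f\in\cI$, then $w(S+e-f)>w(S)$. Each case contradicts maximality.

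\emph{Sufficiency.} I will reduce to bases of a derived matroid and then use a standard exchange argument. Let $S$ satisfy (i)--(iii), and let $T\in\cI$ be arbitrary; the goal is $w(T)\le w(S)$.

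First, I may assume $w(e)\ge 0$ for all $e\in T$, since deleting negative elements of $T$ keeps $T$ independent and only increases $w(T)$.

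Next, let $Y:=\{e\in T\setminus S : S+e\in\cI\}$. By (i), $w(e)\le 0$ on $Y$, and by the previous reduction $w(e)\ge 0$ on $T$. So these elements have weight exactly $0$, and dropping them does not change $w(T)$. I may therefore assume $T\subseteq \Span(S)$.

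Now both $S$ and $T$ lie in the restriction $\M|\Span(S)$, whose rank is $|S|$, so $S$ is a base of this restriction. Extend $T$ to a base $T'$ of the restriction using elements of $S$. Each added element has nonnegative weight by (ii), so $w(T)\le w(T')$. It remains to show $w(T')\le w(S)$ for two bases $S,T'$ of the same matroid.

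\emph{Comparing the two bases.} I will argue by induction on $|T'\setminus S|$, using the simultaneous exchange property. Take $e\in T'\setminus S$. There is $f\in S\setminus T'$ such that $S+e-f$ and $T'-e+f$ are both bases. By (iii), $w(e)\le w(f)$. Then $T'':=T'-e+f$ satisfies $w(T'')\ge w(T')$ and $|T''\setminus S|$ is one smaller. By induction $w(T'')\le w(S)$, hence $w(T')\le w(S)$.

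The induction needs (iii) for the base $S$ itself, which holds because $S$ is fixed throughout. Condition (iii) is stated in $\M$, while the exchange is performed in the restriction; this is harmless, since independence of $S+e-f$ in the restriction is the same as independence in $\M$.

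\emph{Main obstacle.} The only delicate point is the reduction to $T\subseteq\Span(S)$. It uses the observation that elements of $T$ outside $\Span(S)$ are exactly those $e$ with $S+e\in\cI$, which holds because $S$ is independent. Once that reduction is in place, the rest is the routine base-exchange argument above.
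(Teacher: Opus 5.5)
Your proof is correct. The paper gives no proof of this lemma; it quotes it from Huang--Kobayashi, so there is no in-paper argument to compare routes against.

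Your argument is complete and self-contained, and it uses only facts already listed in Section~\ref{sec:prelim}. Necessity follows from single deletions, additions and swaps. For sufficiency you:
\begin{enumerate}
\item discard the negative-weight elements of $T$;
\item discard the elements $e$ with $S+e\in\cI$, which now have weight exactly $0$, so that $T\subseteq\Span(S)$;
\item pad $T$ with elements of $S$, which are nonnegative by (ii), to a base $T'$ of $\M|\Span(S)$;
\item walk $T'$ to $S$ using the simultaneous exchange property, with (iii) certifying that no swap decreases the weight.
\end{enumerate}
This is the same exchange tool the paper itself uses for Lemma~\ref{lem:test}.

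The only thing left implicit is the base case $T'\setminus S=\emptyset$. There $T'=S$, because both are bases of the same matroid and hence have the same size.
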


\begin{lemma}[Binary search technique~{\cite[Lemma~2.4\,(1)]{huang2026auction}}; from {\cite{nguyen2019note, chakrabarty2019faster, blikstad2021breaking_STOC}}]\label{lem:bs}
Let $\M=(E,\cI)$ be a matroid, let $w\in\R^E$, let $S\in\cI$, $R\subseteq S$, and $v\in E\setminus S$.
Using $O(\log|R|)$ independence queries, we can find an element $u\in R$ minimizing $w(u)$ subject to $S+v-u\in\cI$ (if such an element exists).
\end{lemma}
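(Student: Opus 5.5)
We handle one easy case and then reduce the general case to a monotone predicate that can be binary searched; sorting by $w$ and all other bookkeeping cost no queries.

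\emph{Easy case.} First query whether $S+v\in\cI$. If it is, then $S+v-u\in\cI$ for every $u\in R$ by downward closure. We simply return an element of $R$ minimizing $w$ (if $R\neq\emptyset$), which needs no further queries.

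\emph{Reduction.} Otherwise $S+v\notin\cI$, so it contains the unique circuit $C:=C_\M(S,v)$. By Observation~\ref{obs:circuit}, for $u\in R$ we have $S+v-u\in\cI$ if and only if $u\in C$. The task is therefore to find a minimum-weight element of $R\cap C$, or to report that $R\cap C=\emptyset$.

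\emph{Search.} Sort $R$ as $u_1,\dots,u_k$ with $w(u_1)\le\dots\le w(u_k)$, where $k=|R|$. For $0\le i\le k$ let $P_i:=\{u_1,\dots,u_i\}$. The key claim is that $S+v-P_i\in\cI$ if and only if $P_i\cap C\neq\emptyset$.
\begin{itemize}
\item If $P_i\cap C=\emptyset$, then $C\subseteq S+v-P_i$, so this set is dependent.
\item Conversely, suppose $u_j\in P_i\cap C$. Then $S+v-u_j\in\cI$ by Observation~\ref{obs:circuit}, and $S+v-P_i\subseteq S+v-u_j$, so $S+v-P_i\in\cI$ by downward closure.
\end{itemize}
The predicate ``$S+v-P_i\in\cI$'' is thus monotone in $i$: it is false for $i=0$ and, once true, remains true for all larger $i$. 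Its first true index $i^\ast$ is exactly the smallest index with $u_{i^\ast}\in C$.

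\emph{Output.} We first query $i=k$. If $S+v-P_k\notin\cI$, then $R\cap C=\emptyset$ and no valid $u$ exists. Otherwise we binary search for $i^\ast$ in $\{1,\dots,k\}$ using $O(\log k)$ independence queries, and return $u_{i^\ast}$. This element lies in $R\cap C$, so $S+v-u_{i^\ast}\in\cI$. It also has minimum weight among such elements, because every $u_j$ with $j<i^\ast$ lies outside $C$ and every $u_j$ with $j>i^\ast$ has weight at least $w(u_{i^\ast})$ by the sorting.

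The total number of queries is $O(\log|R|)$ plus a constant. The only nontrivial step is the monotonicity claim, which rests on the uniqueness of the circuit in $S+v$.
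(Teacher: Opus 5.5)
Your proof is correct and is essentially the standard argument behind the cited lemma, which the paper invokes without proof: sort $R$ by $w$ and binary search for the shortest prefix $P_i$ with $S+v-P_i\in\cI$, using that this holds exactly when $P_i$ meets the fundamental circuit of $v$. The extra constant number of queries from your preliminary checks is harmless, since $O(\log|R|)$ is conventionally read as $O(1+\log|R|)$.
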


\begin{lemma} \label{lem:test}
Let $\M=(E,\cI)$ be a matroid with base family $\cB$, let $q\in\R^E$, let $Y\subseteq E$ be contained in a $q$-maximum base $Z$ of $\M$, and let $e\in E\setminus Y$.
Then $Y+e$ is contained in a $q$-maximum base of $\M$ if and only if there exists $e'\in Z\setminus Y$ such that $Z-e'+e\in\cB$ and $q(e')=q(e)$ (possibly $e'=e$).
\end{lemma}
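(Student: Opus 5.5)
We prove both directions separately, using the characterization of $q$-maximum bases by exchanges. Every basis exchange $Z-e'+e\in\cB$ with $e'\in Z$, $e\notin Z$ satisfies $q(e')\ge q(e)$ when $Z$ is $q$-maximum. This follows directly from maximality.

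\emph{Sufficiency.} Suppose $e'\in Z\setminus Y$ satisfies $Z-e'+e\in\cB$ and $q(e')=q(e)$.
\begin{itemize}
\item If $e'=e$, then $e\in Z$, and $Z$ itself is a $q$-maximum base containing $Y+e$.
\item Otherwise $Z':=Z-e'+e$ is a base with $q(Z')=q(Z)$, so it is $q$-maximum. Since $e'\notin Y$, we have $Y\subseteq Z-e'$, hence $Y+e\subseteq Z'$.
\end{itemize}

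\emph{Necessity.} Suppose $Y+e\subseteq W$ for some $q$-maximum base $W$. If $e\in Z$, take $e'=e$; this is allowed since $e\in Z\setminus Y$ and $Z-e+e=Z\in\cB$. So assume $e\notin Z$. Then $e\in W\setminus Z$.

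Apply the simultaneous exchange property to $W$, $Z$ and $e\in W\setminus Z$. It yields $f\in Z\setminus W$ such that both $W-e+f$ and $Z-f+e$ are bases. By maximality of $W$ we get $q(f)\le q(e)$, and by maximality of $Z$ we get $q(e)\le q(f)$. Hence $q(f)=q(e)$.

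It remains to check $f\notin Y$. This holds because $f\notin W\supseteq Y$. So $e':=f\in Z\setminus Y$ works.

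The only delicate point is ensuring that the exchanged element $e'$ avoids $Y$. Choosing the exchange via $W\supseteq Y$, rather than arbitrarily, handles this automatically.
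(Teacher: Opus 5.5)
Your proof is correct and follows essentially the same route as the paper: a direct single exchange for sufficiency, and for necessity the trivial case $e\in Z$ (take $e'=e$) followed by the simultaneous exchange property applied to $W\supseteq Y+e$ and $Z$, which gives $e'\in Z\setminus W\subseteq Z\setminus Y$. One small point is left implicit in your sufficiency case $e'\neq e$, as it is in the paper: there $e\notin Z$, since otherwise $Z-e'+e=Z-e'$ would be too small to be a base, and this is what justifies $q(Z')=q(Z)-q(e')+q(e)$.
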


\begin{proof}
We first show the ``if'' part.
Suppose that there exists $e'\in Z\setminus Y$ such that $Z-e'+e\in\cB$ and $q(e')=q(e)$.
Then $q(Z-e'+e)=q(Z)-q(e')+q(e)=q(Z)$, and hence $Z-e'+e$ is a $q$-maximum base of $\M$.
Since $e'\notin Y$, the base $Z-e'+e$ contains $Y+e$.

We next show the ``only if'' part.
Suppose that $Y+e$ is contained in a $q$-maximum base $Z'$ of $\M$.
If $e\in Z$, then $e'=e$ satisfies the required conditions, because $e\in Z\setminus Y$ and $Z-e+e=Z\in\cB$.
Suppose that $e\notin Z$.
Then $e\in Z'\setminus Z$.
By the simultaneous exchange property applied to the bases $Z'$ and $Z$ and the element $e\in Z'\setminus Z$, there exists $e'\in Z\setminus Z'$ such that both $Z'-e+e'$ and $Z+e-e'$ are bases of $\M$.
Since $Z$ and $Z'$ are $q$-maximum bases, we have $q(Z+e-e')\le q(Z)$ and $q(Z'-e+e')\le q(Z')$, that is, $q(e)\le q(e')$ and $q(e')\le q(e)$.
Hence $q(e')=q(e)$.
Moreover, $e'\in Z\setminus Z'\subseteq Z\setminus Y$ because $Y\subseteq Z'$.
Therefore, $e'$ satisfies the required conditions.
\end{proof}

% ====================================================================
\section{The Algorithm}\label{sec:algo}

\subsection{Prices and Utilities}

In the algorithm, we maintain a common independent set $X\in\cI_1\cap\cI_2$ and a nonnegative value $\pi(e)\in\R_+$ for each $e\in X$.
For notational convenience, we write $\pi(e)=\bot$ for $e\in E\setminus X$.
For $X$ and $\pi$, we define the vector $\bp\in\R_+^E$ by
\begin{equation}\label{eq:barprice}
  \bp(e):=
  \begin{cases}
    \pi(e) & \text{if } e\in X,\\
    0 & \text{if } e\in E\setminus X \text{ and } X+e\in\cI_2,\\
    \min\{\pi(f)\mid f\in C_2(X\mid e)-e\} & \text{if } e\in E\setminus X \text{ and } X+e\notin\cI_2.
  \end{cases}
\end{equation}
Note that, for $e \in E \setminus X$ with $X + e \notin \cI_2$, there always exists an element in $C_2(X\mid e)-e$, because we have assumed that $\M_2$ has no loops.
We define the \emph{price vector $p\in\R_+^E$ induced by $X$ and $\pi$} by adding $\eps w(e)$ to $\bp(e)$ for each $e\in E\setminus X$.
That is,
\begin{equation}\label{eq:price}
  p(e)=
  \begin{cases}
    \pi(e) & \text{if } e\in X,\\
    \eps w(e) & \text{if } e\in E\setminus X \text{ and } X+e\in\cI_2,\\
    \min\{\pi(f)\mid f\in C_2(X\mid e)-e\}+\eps w(e) & \text{if } e\in E\setminus X \text{ and } X+e\notin\cI_2.
  \end{cases}
\end{equation}
We omit ``induced by $X$ and $\pi$'' if no confusion arises.

These vectors have the following auction interpretation.
There are two players, Alice and Bob, and the weight $w(e)$ is interpreted as Alice's valuation of the item $e\in E$.
The set $X$ consists of the items that Alice and Bob have agreed to trade so far, and $\pi(e)$ is the agreed price of $e\in X$.
To sell an additional item $e\in E\setminus X$ with $X+e\notin\cI_2$, Bob has to remove some item $f\in X$ with $X+e-f\in\cI_2$.
The value $\bp(e)$ is the minimum price of such an item, and $\bp(e)=0$ if $X+e\in\cI_2$.
Bob sells $e$ at the price $p(e)$, which adds $\eps w(e)$ to $\bp(e)$.
The added term $\eps w(e)$ plays the same role as the price increase that Algorithm~\cite[Algorithm~2]{huang2026auction} applies when Bob removes an item from $X$.

For $u>0$, we define $\rup{u}:=(1+\eps)^{\lceil\log_{1+\eps}u\rceil}$, which rounds $u$ up to the nearest power of $1+\eps$, and $\rup{u}:=0$ for $u\le0$.
Let $\Lambda:=\{0\}\cup\{(1+\eps)^k\mid k\in\Z\}$ denote the set of all possible values of $\rup{\cdot}$.
That is, $\rup{u}=\min\{\lambda\in\Lambda\mid\lambda\ge u\}$ for every $u\in\R$.

We use two basic properties of the rounding.
\begin{enumerate}[label=($\Lambda$\arabic*),leftmargin=*]
\item\label{grid:round} $\max\{u,0\}\le\rup{u}\le(1+\eps)\cdot\max\{u,0\}$.
\item\label{grid:gap} If $\lambda,\lambda'\in\Lambda$ and $\lambda<\lambda'$, then $\lambda\le\lambda'/(1+\eps)$.
\end{enumerate}

We define the \emph{utility vector $q\in\Lambda^E$ induced by $X$ and $\pi$} as follows:
\begin{equation}\label{eq:utility}
  q(e):=\rup{w(e)-p(e)}\qquad(e\in E).
\end{equation}

In the auction interpretation, $q(e)$ is the rounded value of the utility $w(e)-p(e)$ that Alice gains by buying $e$ at the price $p(e)$.

The following lemma captures the effect of the rounding.
Roughly speaking, the rounding ensures that a utility can decrease only by a factor of at least $1+\eps$, and the lemma makes this guarantee usable in the analysis.
We apply it to $\M_1$ in the analysis of our algorithm (Section~\ref{sec:round}).

\begin{lemma}\label{lem:exchange}
Let $\M=(E,\cI)$ be a matroid with base family $\cB$, let $q\in\Lambda^E$ and $q'\in\R^E$, and let $Y\subseteq E$ be contained in a $q$-maximum base of $\M$.
Suppose that
\begin{enumerate}[label=(H\arabic*),leftmargin=*]
\item\label{hyp:mono} $q'(e)\le q(e)$ for every $e\in E$, and
\item\label{hyp:drop} $q'(e)\le q(e)/(1+\eps)$ for every $e\in E\setminus Y$ such that $Y+e$ is contained in a $q$-maximum base.
\end{enumerate}
Then
\[
  \max_{B\in\cB}q'(B)-q(Y)\;\le\;\frac{1}{1+\eps}\Bigl(\max_{B\in\cB}q(B)-q(Y)\Bigr).
\]
\end{lemma}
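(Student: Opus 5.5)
The plan is to compare an arbitrary base $B'\in\cB$ against a fixed $q$-maximum base $Z$ containing $Y$, pairing elements through a bijective exchange. The goal is to show
\[
q'(B')\le q(Y)+\frac{1}{1+\eps}\,q(Z\setminus Y).
\]
Since $q(Z)=\max_{B\in\cB}q(B)$, subtracting $q(Y)$ from both sides then gives the claim.

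\textbf{Step 1 (bijective exchange).} I would invoke the well-known strengthening of the exchange property due to Brualdi (see \cite[Corollary~39.12a]{schrijver2003combinatorial}). For bases $Z,B'$ there is a bijection $\sigma\colon B'\setminus Z\to Z\setminus B'$ such that $Z-\sigma(e)+e\in\cB$ for every $e\in B'\setminus Z$. This is the main ingredient not yet stated in the paper; the simultaneous exchange property alone is not bijective. If a citation is not acceptable, I would derive it from Hall's theorem on the exchange graph, which is the standard proof. By $q$-maximality of $Z$ we get $q(e)\le q(\sigma(e))$ for every $e\in B'\setminus Z$.

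\textbf{Step 2 (per-element bounds).} Split $q'(B')=q'(B'\cap Z)+\sum_{e\in B'\setminus Z}q'(e)$ and bound each term.
\begin{itemize}
\item For $f\in B'\cap Y$, \ref{hyp:mono} gives $q'(f)\le q(f)$.
\item For $f\in (B'\cap Z)\setminus Y$, the set $Y+f\subseteq Z$ lies in a $q$-maximum base, so \ref{hyp:drop} gives $q'(f)\le q(f)/(1+\eps)$.
\item For $e\in B'\setminus Z$ with $\sigma(e)\in Y$, we only use $q'(e)\le q(e)\le q(\sigma(e))$.
\item For $e\in B'\setminus Z$ with $\sigma(e)\notin Y$, there are two subcases.
\begin{itemize}
\item If $q(e)<q(\sigma(e))$, both values lie in $\Lambda$. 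Then \ref{grid:gap} and \ref{hyp:mono} give $q'(e)\le q(e)\le q(\sigma(e))/(1+\eps)$.
\item If $q(e)=q(\sigma(e))$, then $e'=\sigma(e)\in Z\setminus Y$ satisfies $Z-e'+e\in\cB$ and $q(e')=q(e)$. Lemma~\ref{lem:test} then shows that $Y+e$ is contained in a $q$-maximum base. Also $e\notin Y$ because $e\notin Z\supseteq Y$. Hence \ref{hyp:drop} gives $q'(e)\le q(e)/(1+\eps)=q(\sigma(e))/(1+\eps)$.
\end{itemize}
\end{itemize}

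\textbf{Step 3 (summing).} Since $\sigma$ is a bijection onto $Z\setminus B'$, each element of $Z$ is charged exactly once: either as itself (if it lies in $B'\cap Z$) or as $\sigma(e)$. Elements of $Y$ are charged at most $q(\cdot)$, and elements of $Z\setminus Y$ at most $q(\cdot)/(1+\eps)$. Summing gives $q'(B')\le q(Y)+q(Z\setminus Y)/(1+\eps)$. This is the desired inequality, since $q(Z\setminus Y)=q(Z)-q(Y)$ and $B'$ was arbitrary.

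The only delicate point is the subcase $q(e)=q(\sigma(e))$ with $\sigma(e)\in Y$, where no drop is available. The charging scheme handles it by assigning such $e$ to an element of $Y$, which is allowed the full value $q$. The essential step is therefore the existence of the bijection $\sigma$ in Step~1.
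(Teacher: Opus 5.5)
Your proof is correct, and it takes a different route from the paper's.

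The paper never compares an arbitrary base with the $q$-maximum base $C\supseteq Y$ directly. Instead it defines the auxiliary weight $\hat q$: equal to $q$ on $Y$, to $q/(1+\eps)$ on $C\setminus Y$, and to $q'$ off $C$. It observes $q'\le\hat q$ pointwise, and then shows that $C$ is still $\hat q$-maximum by checking the local optimality criterion of Lemma~\ref{lem:local}, which involves only single exchanges $C-c+e$. Its verification of condition (iii) is exactly your three-way case split:
\begin{itemize}
\item $c\in Y$;
\item $c\notin Y$ with $q(e)<q(c)$, using \ref{grid:gap};
\item $c\notin Y$ with $q(e)=q(c)$, using that $C-c+e$ is a $q$-maximum base containing $Y+e$ and then \ref{hyp:drop}.
\end{itemize}

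In effect, your charging through Brualdi's bijection $\sigma$ is a direct proof of the global optimality that the paper obtains from Lemma~\ref{lem:local}. The paper's route is more modular: it reuses the already-cited local criterion and needs only single-element exchanges with $C$. Yours is more self-contained and explicit, with no auxiliary weight and a direct element-by-element comparison of $q'(B')$ with $q(Y)+q(Z\setminus Y)/(1+\eps)$. The cost is importing the bijective exchange theorem, which the paper never states.

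A minor remark: in the equality subcase you only need the trivial direction of Lemma~\ref{lem:test}. Since $\sigma(e)\notin Y$ and $q(e)=q(\sigma(e))$, the base $Z-\sigma(e)+e$ is itself a $q$-maximum base containing $Y+e$.
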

\begin{proof}
Let $C$ be a $q$-maximum base of $\M$ containing $Y$, so that $q(C)=\max_{B\in\cB}q(B)$.
Define $\hat q\in\R^E$ by
\[
  \hat q(e):=
  \begin{cases}
    q(e) & \text{if } e\in Y,\\
    q(e)/(1+\eps) & \text{if } e\in C\setminus Y,\\
    q'(e) & \text{if } e\in E\setminus C.
  \end{cases}
\]
We first show that $q'(e)\le \hat q(e)$ for every $e\in E$.
For $e\in Y$, this is~\ref{hyp:mono}.
For $e\in C\setminus Y$, the $q$-maximum base $C$ contains $Y+e$, and hence $q'(e)\le q(e)/(1+\eps)=\hat q(e)$ by~\ref{hyp:drop}.
For $e\in E\setminus C$, we have $\hat q(e)=q'(e)$ by the definition of $\hat q$.
Moreover, $\hat q(C)=q(Y)+\bigl(q(C)-q(Y)\bigr)/(1+\eps)$ by the definition of $\hat q$.
Hence, if $C$ is a $\hat q$-maximum independent set of $\M$, then
\[
  \max_{B\in\cB}q'(B)\;\le\;\max_{B\in\cB}\hat q(B)\;\le\;\hat q(C)\;=\;q(Y)+\frac{q(C)-q(Y)}{1+\eps},
\]
which is the assertion.
It remains to show that $C$ is a $\hat q$-maximum independent set of $\M$.

To this end, we check the three conditions of Lemma~\ref{lem:local} with $w=\hat q$ and $S=C$.
Condition~(i) is vacuous because $C$ is a base.
Condition~(ii) holds because $\hat q(c)\ge0$ for every $c\in C$, which follows from $q(c)\ge0$.
For condition~(iii), let $e\in E\setminus C$ and $c\in C$ with $C-c+e\in\cI$.
Then $C-c+e$ is a base of $\M$, and hence $q(e)\le q(c)$ because $C$ is a $q$-maximum base.
We also have $\hat q(e)=q'(e)\le q(e)$ by~\ref{hyp:mono}.
We show that $\hat q(e)\le \hat q(c)$ by distinguishing three cases.
\begin{itemize}[leftmargin=*]
\item If $c\in Y$, then $\hat q(c)=q(c)\ge q(e)\ge \hat q(e)$.
\item If $c\notin Y$ and $q(e)<q(c)$, then $q(e)\le q(c)/(1+\eps)$ by~\ref{grid:gap} because $q(e),q(c)\in\Lambda$, and hence $\hat q(e)\le q(e)\le q(c)/(1+\eps)=\hat q(c)$.
\item If $c\notin Y$ and $q(e)=q(c)$, then $C-c+e$ is a $q$-maximum base of $\M$ containing $Y+e$, and hence $\hat q(e)=q'(e)\le q(e)/(1+\eps)=q(c)/(1+\eps)=\hat q(c)$ by~\ref{hyp:drop}.
\end{itemize}
Therefore, $C$ is a $\hat q$-maximum independent set of $\M$ by Lemma~\ref{lem:local}, which completes the proof.
\end{proof}

\subsection{Algorithm Description}

Our algorithm is presented in Algorithm~\ref{alg:weighted}.
We call each iteration of the outer for-loop a \emph{round}, and we denote by $X^t$ the set $X$ at the end of round $t$, where $X^0=\emptyset$.
In each round, $X^*$, $\pi^*$, $p^*$, and $q^*$ denote $X$, $\pi$, $p$, and $q$ at the beginning of the round. 
They stay fixed during the round, whereas $X$, $\pi$, $p$, and $q$ change.
The number of rounds is $T:=\bigl\lceil 10\,(\eps^{-1}\log (r_1(E))+\eps^{-2})\bigr\rceil$.

\begin{algorithm}[t]
\caption{Weighted Matroid Intersection with Nearly Linear Independence Queries}\label{alg:weighted}
\KwInput{Two matroids $\M_1=(E,\cI_1)$ and $\M_2=(E,\cI_2)$, and weight $w(e)\in \R_+$ for $e\in E$}
\KwOutput{A common independent set $X\in\cI_1\cap\cI_2$}
Let $X\gets\emptyset$\;
\For{$t\gets 1$ \KwTo $T$}{
  $X^*,Y\gets X$\;
  $\pi^*(e)\gets\pi(e)$ for each $e\in E$\;
  \tcp{We keep explicitly $\pi^*(e)$ only for $e\in X^*$, as $\pi^*(e)=\bot$ for $e\in E\setminus X^*$}
  \tcp{Let $q^*$ denote the utility vector induced by $X^*$ and $\pi^*$}
  \ForEach{$e\in E\setminus X^*$}{
    \If{$q(e)=q^*(e)>0$ and $Y+e$ is contained in a $q^*$-maximum base of $\M_1$\label{line:cond}}{
      $Y\gets Y+e$\;\label{line:buy}
      $\pi(e)\gets p(e)$\;\label{line:pi}
      \eIf{$X+e\in\cI_2$}{
        $X\gets X+e$\;\label{line:add}
      }{
        Find $f\in C_2(X\mid e)-e$ minimizing $\pi(f)$\;
        $X\gets X+e-f$\;\label{line:exchange}
        $\pi(f)\gets\bot$\;\label{line:buyend}
      }
      \tcp{The price vector $p$ and the utility vector $q$ induced by $X$ and $\pi$ are updated implicitly}
    }
  }
  $X^t\gets X$\;
}
\Return{$X^t$ maximizing $w(X^t)$ over $t\in\{0,1,\dots,T\}$}\;
\end{algorithm}

In a round, the algorithm scans each $e\in E\setminus X^*$ once, in an arbitrary order.
If $q(e)=q^*(e)>0$, that is, the utility of $e$ is positive and has not changed since the beginning of the round, and $Y+e$ is contained in a $q^*$-maximum base of $\M_1$ (Line~\ref{line:cond}), then $e$ is added to $Y$ and $\pi(e)$ is set to the current price $p(e)$ (Lines~\ref{line:buy} and~\ref{line:pi}).
Then, $e$ is added to $X$.
If $X+e\notin\cI_2$, an element $f\in C_2(X\mid e)-e$ minimizing $\pi(f)$ is removed from $X$ at the same time (Lines~\ref{line:add}--\ref{line:buyend}).

In the auction interpretation, a round proceeds as follows.
Alice aims to buy a $q^*$-maximum base of $\M_1$.
She buys an item $e\in E\setminus X^*$ if its utility is positive, if the utility has not changed since the beginning of the round, and if $e$ and the items bought so far lie in a common $q^*$-maximum base of $\M_1$.
The last condition says that buying $e$ keeps this aim achievable.
When Alice buys $e$ and $X+e\notin\cI_2$, Bob removes a cheapest item $f\in X$ with $X+e-f\in\cI_2$.

\section{Analysis}\label{sec:analysis}

% Throughout this section, let
% \[
%   U:=\pi(X)+\max_{B\in\cB_1}q(B).
% \]

Our analysis mainly uses the following quantity.
For $X$ and $\pi$, let
\[
  U(X,\pi):=\pi(X)+\max_{B\in\cB_1}q(B),
\]
where $q$ is the utility vector induced by $X$ and $\pi$.
We simply write $U:=U(X,\pi)$ if no confusion arises, where $X$ and $\pi$ are the current ones maintained by Algorithm~\ref{alg:weighted}.

\subsection{Useful Properties}\label{sec:basic}

In this subsection, we show several properties that are maintained throughout the execution of Algorithm~\ref{alg:weighted}.

\begin{observation}\label{obs:basic}
Throughout the execution of Algorithm~\ref{alg:weighted}, the following hold.
\begin{enumerate}[label=(\alph*),leftmargin=*]
\item\label{obs:range} $X\in\cI_2$, and $\eps w(e)\le\pi(e)<w(e)$ for each $e\in X$. In particular, $\pi(X)\le w(X)$.
\item\label{obs:charge} When Lines~\ref{line:buy}--\ref{line:buyend} are executed for $e\in E\setminus X^*$, the value $p(e)$ is unchanged and $\pi(X)$ increases by exactly $\eps w(e)$.
\end{enumerate}
\end{observation}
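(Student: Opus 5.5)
We argue by induction over the execution, checking that each step of Algorithm~\ref{alg:weighted} preserves the invariants. Initially $X=\emptyset$ and both statements are vacuous. The set $X$ changes only at Lines~\ref{line:add} and~\ref{line:exchange}, and $\pi$ changes only at Lines~\ref{line:pi} and~\ref{line:buyend}. So it suffices to analyze one execution of Lines~\ref{line:buy}--\ref{line:buyend} for an element $e\in E\setminus X^*$ satisfying the condition in Line~\ref{line:cond}. First note that $e\notin X$ at that moment. Elements added during the round are scanned only once, and $e$ has not been scanned before, so $e\notin X^*$ implies $e\notin X$.

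For \ref{obs:range}, membership $X\in\cI_2$ follows directly from the two branches. In the first branch $X+e\in\cI_2$ is tested. In the second branch we pick $f\in C_2(X\mid e)-e$, and Observation~\ref{obs:circuit} gives $X+e-f\in\cI_2$. For the price bounds, the new element gets $\pi(e)=p(e)$, computed with the current $X$ and $\pi$ before the update. By \eqref{eq:price} and $\bp(e)\ge0$ (all $\pi$-values are nonnegative by induction), we have $p(e)\ge\eps w(e)$. Line~\ref{line:cond} requires $q(e)>0$, so $\rup{w(e)-p(e)}>0$, which by the definition of rounding means $w(e)-p(e)>0$, i.e.\ $\pi(e)<w(e)$. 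The other elements of $X$ keep their $\pi$-values, and the removed $f$ gets $\bot$, so the bounds persist. Summing gives $\pi(X)\le w(X)$.

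For \ref{obs:charge}, the first claim is that $p(e)$ is unchanged by Lines~\ref{line:buy}--\ref{line:buyend}. Here $p(e)$ means the price at the moment of purchase; after the update $e\in X$ and $p(e)=\pi(e)$, which was set to exactly that value. For the increase of $\pi(X)$ we split into the two branches. If $X+e\in\cI_2$, then $\bp(e)=0$ and $p(e)=\eps w(e)$, so $\pi(X)$ increases by $\pi(e)=\eps w(e)$. Otherwise, $f$ minimizes $\pi$ over $C_2(X\mid e)-e$, so $\pi(f)=\bp(e)$ by \eqref{eq:barprice}. Then $\pi(X)$ changes by $\pi(e)-\pi(f)=p(e)-\bp(e)=\eps w(e)$.

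The only delicate point is bookkeeping. All quantities $p(e)$, $\bp(e)$ and $C_2(X\mid e)$ must be evaluated with respect to $X$ and $\pi$ immediately before the update, which is exactly when Line~\ref{line:pi} reads $p(e)$. Once this convention is fixed, every step is a direct check.
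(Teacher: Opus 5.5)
Your proposal is correct and follows essentially the same argument as the paper. You first establish $e\notin X$ at purchase time, then get $X\in\cI_2$ from the branch test and Observation~\ref{obs:circuit}. The bounds $\eps w(e)\le\pi(e)<w(e)$ come from $\bp(e)\ge 0$ and $q(e)>0$ at the moment $\pi(e)$ is assigned, and the $\eps w(e)$ increase of $\pi(X)$ comes branchwise from $\pi(f)=\bp(e)$.
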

\begin{proof}
When Lines~\ref{line:buy}--\ref{line:buyend} are executed for $e$, we have $e\notin X$.
Indeed, $e\notin X^*$, and in a round an element is added to $X$ only when these lines are executed for it, which happens at most once because each element of $E\setminus X^*$ is scanned once.

(a)~Lines~\ref{line:add} and~\ref{line:exchange} preserve $X\in\cI_2$ by Observation~\ref{obs:circuit}.
For $e\in X$, the value $\pi(e)$ was assigned in Line~\ref{line:pi} and has not changed since then.
At the time of the assignment, we have $e\notin X$, so $\pi(e)=p(e)=\bp(e)+\eps w(e)\ge\eps w(e)$, where $\bp(e)\ge0$ because $\bp(e)$ is either $0$ or a value of $\pi$ on $X$, and $\pi\ge0$ on $X$ by induction.
Moreover, $q(e)>0$ in Line~\ref{line:cond} means $w(e)-p(e)>0$ by~\ref{grid:round}, so $\pi(e)=p(e)<w(e)$.

(b)~In Line~\ref{line:pi}, $\pi(e)$ is set to $p(e)$.
After Line~\ref{line:add} or~\ref{line:exchange}, we have $e\in X$ and $p(e)=\pi(e)$.
Hence, $p(e)$ is unchanged by the execution of Lines~\ref{line:buy}--\ref{line:buyend}.
If $X+e\in\cI_2$, then $\bp(e)=0$, and $\pi(X)$ increases by $\pi(e)=\eps w(e)$.
Otherwise, $f$ minimizes $\pi$ on $C_2(X\mid e)-e$, so $\pi(f)=\bp(e)$, and $\pi(X)$ increases by $\pi(e)-\pi(f)=\eps w(e)$.
\end{proof}

We next show that prices never decrease and utilities never increase.
In the auction interpretation, Bob never lowers the price of an item.

\begin{lemma}\label{lem:monotone}
For each $e'\in E$, $p(e')$ is monotonically non-decreasing and $q(e')$ is monotonically non-increasing throughout the execution of Algorithm~\ref{alg:weighted}.
\end{lemma}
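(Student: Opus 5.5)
Since $w$ is fixed and $q(e')=\rup{w(e')-p(e')}$ with $\rup{\cdot}$ monotone non-decreasing, it suffices to show that $p(e')$ never decreases. The only changes to $X$ and $\pi$ occur in Lines~\ref{line:buy}--\ref{line:buyend}, so I will fix one such execution for an element $e$. I write $X,\pi,p$ for the values before it and $X',\pi',p'$ for the values after, so that $X'=X+e$ or $X'=X+e-f$. I will then show $p'(e')\ge p(e')$ for every $e'\in E$, splitting into cases by the status of $e'$.

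\textbf{Elements that end up in $X'$.} For $e'=e$ this is Observation~\ref{obs:basic}\ref{obs:charge}. For $e'\in X'-e\subseteq X$, $\pi(e')$ is unchanged, so $p'(e')=p(e')$.

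\textbf{The removed element $f$.} Here $p(f)=\pi(f)$ and $p'(f)=\bp'(f)+\eps w(f)$. Since $X'+f=X+e$ is dependent in $\M_2$, the circuit $C_2(X'\mid f)$ is exactly $C_2(X\mid e)$. Hence $\bp'(f)=\min\{\pi'(g)\mid g\in C_2(X\mid e)-f\}$. The elements of this set are $e$, with $\pi'(e)=p(e)=\pi(f)+\eps w(e)\ge\pi(f)$, together with the elements of $C_2(X\mid e)-e-f$, whose $\pi$-values are at least $\pi(f)$ by the minimality of $f$. Thus $p'(f)\ge\pi(f)+\eps w(f)\ge p(f)$.

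\textbf{Elements $e'\notin X\cup X'$.} This is the main case, and it reduces to comparing $\bp'(e')$ with $\bp(e')$. If $X+e'\in\cI_2$, then $\bp(e')=0\le\bp'(e')$, so there is nothing to show. Otherwise $\bp(e')=\min\pi$ over $C_2(X\mid e')-e'$, and I need to show $\bp'(e')\ge\bp(e')$. My plan is to characterize $\bp(e')$ as the largest $\lambda$ such that $e'\in\Span_2(\{g\in X\mid\pi(g)\ge\lambda\})$. This holds because the elements $g\in X$ with $X-g+e'\in\cI_2$ are exactly those in the circuit, by Observation~\ref{obs:circuit}.

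It then suffices to show, for each threshold $\lambda\le\bp(e')$, that $\Span_2(\{g\in X'\mid\pi'(g)\ge\lambda\})\supseteq\Span_2(\{g\in X\mid\pi(g)\ge\lambda\})$. There are two cases.
\begin{itemize}
\item If $\lambda>\pi(f)$, or if $f$ does not exist, then the new level set is the old one, possibly with $e$ added, since $\pi'(e)\ge\pi(f)$. The span can only grow.
\item If $\lambda\le\pi(f)$, then $f$ was in the old level set. Moreover $\pi'(e)=\pi(f)+\eps w(e)\ge\lambda$, so $e$ lies in the new level set. All other elements of $C_2(X\mid e)$ have $\pi\ge\pi(f)\ge\lambda$ and also lie in it, so $f$ is spanned by the new level set. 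Hence the new span contains $f$ and the rest of the old level set, and therefore contains the old span.
\end{itemize}
This gives $\bp'(e')\ge\bp(e')$.

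I expect the span characterization to be the main obstacle. It must be stated carefully so that the case with no circuit ($\bp=0$) is handled uniformly, and so that the condition that $X+e'$ be dependent after the update is covered, which it is, since spans only grow.
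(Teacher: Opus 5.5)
Your proof is correct and takes essentially the same route as the paper. Both arguments rest on the level-set characterization $\bp(e')\ge\lambda \Leftrightarrow e'\in\Span_2(X_{\ge\lambda})$ and on the same two-case check that the span of each level set cannot shrink, where $X_{\ge\lambda}$ becomes $X_{\ge\lambda}-f+e$ when $\lambda\le\pi(f)$. The one difference is that you handle the removed element $f$ directly via $C_2(X'\mid f)=C_2(X\mid e)$, rather than through the general monotonicity of $\bp$. This is a harmless and slightly more elementary variation.
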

\begin{proof}
For $z>0$, let $X_{\ge z}:=\{x\in X\mid\pi(x)\ge z\}$.
We first show that
\begin{equation}\label{eq:level-span}
  \bp(e')\ge z\quad\Longleftrightarrow\quad e'\in\spn_{\M_2}(X_{\ge z})\qquad(e'\in E,\ z>0).
\end{equation}
For $e'\in X$, the left-hand side is $\pi(e')\ge z$ by the definition of $\bp$.
The right-hand side also holds if and only if $\pi(e')\ge z$, that is, $e'\in X_{\ge z}$, because $X$ is independent and hence no element of $X\setminus X_{\ge z}$ is contained in $\spn_{\M_2}(X_{\ge z})$.
For $e'\notin X$, the right-hand side says that $X_{\ge z}+e'$ contains a circuit.
The only circuit in $X+e'$, if any, is $C_2(X\mid e')$, so this happens if and only if $X+e'\notin\cI_2$ and $C_2(X\mid e')-e'\subseteq X_{\ge z}$, that is, if and only if $X+e'\notin\cI_2$ and every element of $C_2(X\mid e')-e'$ has price at least $z$.
By the definition of $\bp$, this is exactly $\bp(e')\ge z$.

We next show that $\bp(e')$ is monotonically non-decreasing for each $e'\in E$.
Since $X$ and $\pi$ change only in Lines~\ref{line:buy}--\ref{line:buyend}, it suffices to show that $\bp(e')$ does not decrease when these lines are executed for an element $e\in E\setminus X^*$.
Let $\alpha$ be the value of $\bp(e')$ before the execution.
If $\alpha=0$, then $\bp(e')\ge \alpha$ after the execution because $\bp(e')\ge0$.
If $\alpha>0$, then $e'\in\spn_{\M_2}(X_{\ge \alpha})$ before the execution by~\eqref{eq:level-span}, and hence $\bp(e')\ge \alpha$ after the execution as long as $e'$ is still contained in $\spn_{\M_2}(X_{\ge a})$.
Therefore, it suffices to show that $\spn_{\M_2}(X_{\ge z})$ does not shrink by the execution for any $z>0$.
Let $\bp_e$ be the value of $\bp(e)$ before the execution.
The execution sets $\pi(e)=\bp_e+\eps w(e)$ in Line~\ref{line:pi}, and, if $X+e\notin\cI_2$, removes from $X$ an element $f$ with $\pi(f)=\bp_e$ in Line~\ref{line:exchange}.
If $X+e\in\cI_2$ or $z>\bp_e$, then $X_{\ge z}$ can only gain the element $e$, and hence $\spn_{\M_2}(X_{\ge z})$ does not shrink.
If $X+e\notin\cI_2$ and $z\le\bp_e$, then $X_{\ge z}$ becomes $X_{\ge z}-f+e$.
In this case, $X_{\ge z}$ contains $C_2(X\mid e)-e$ before the execution, because every $f'\in C_2(X\mid e)-e$ satisfies $\pi(f')\ge\bp_e\ge z$ by the definition of $\bp_e$.
Hence, $e\in\spn_{\M_2}(X_{\ge z})$, and $f\in\spn_{\M_2}(X_{\ge z}+e-f)$ because $f\in C_2(X\mid e)\subseteq X_{\ge z}+e$.
Thus, $X_{\ge z}-f+e$ and $X_{\ge z}$ have the same span.

We now show that $p(e')$ is monotonically non-decreasing for each $e'\in E$.
Recall that $p(e')=\bp(e')+\eps w(e')$ for $e'\notin X$ and $p(e')=\bp(e')$ for $e'\in X$.
Consider again the execution of Lines~\ref{line:buy}--\ref{line:buyend} for $e$, and let $f$ be the element removed from $X$ in Line~\ref{line:exchange} if $X+e\notin\cI_2$.
If $e'\notin\{e,f\}$, then whether $e'\in X$ does not change, and hence $p(e')$ does not decrease because $\bp(e')$ does not decrease.
If $e'=e$, then $p(e')$ is unchanged by Observation~\ref{obs:basic}\ref{obs:charge}.
If $e'=f$, then $f$ is removed from $X$, and hence $p(f)=\bp(f)+\eps w(f)$ after the execution, which is at least the value $p(f)=\bp(f)$ before the execution because $\bp(f)$ does not decrease.
Finally, $q(e')=\rup{w(e')-p(e')}$ is monotonically non-increasing for each $e'\in E$, because $p(e')$ is monotonically non-decreasing.
\end{proof}

The next lemma establishes the invariants of the algorithm.
In the auction interpretation, it says that Alice's aim of buying a $q^*$-maximum base of $\M_1$ stays achievable throughout each round.

\begin{lemma}\label{lem:inv}
In each round, $Y$ is contained in a $q^*$-maximum base of $\M_1$ before and after each iteration of the inner for-loop.
Moreover, $X\in\cI_1\cap\cI_2$ throughout, and at the beginning of each round $X$ is contained in a $q$-maximum base of $\M_1$.
\end{lemma}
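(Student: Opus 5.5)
The proof will be a single induction over the rounds and, inside each round, over the iterations of the inner loop. The invariant carried along is that at every moment $X \subseteq Y \cup X^*$, more precisely $X = X^* - D + Y'$ where $Y' = Y \setminus X^*$ is the set of newly bought elements and $D \subseteq X^*$ the set of removed ones.

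\emph{First claim.} At the beginning of a round we have $Y = X^*$. By the induction hypothesis from the previous round, $X^*$ lies in a $q$-maximum base of $\M_1$, and $q = q^*$ at that moment. So the claim holds before the first iteration. An iteration changes $Y$ only when the test in Line~\ref{line:cond} succeeds, and that test explicitly checks that $Y + e$ lies in a $q^*$-maximum base. So the claim for $Y$ is immediate by induction. As a consequence, $Y \in \cI_1$ throughout the round.

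\emph{$X \in \cI_1 \cap \cI_2$.} Membership $X \in \cI_2$ is Observation~\ref{obs:basic}\ref{obs:range}. For $\M_1$, every element ever added to $X$ during the round is also added to $Y$, and $X^* \subseteq Y$ from the start. Removals only shrink $X$. Hence $X \subseteq Y$ throughout the round. Since $Y \in \cI_1$, downward closure gives $X \in \cI_1$.

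\emph{Main step: at the start of the next round, $X$ lies in a $q$-maximum base of $\M_1$.} Here $q$ is the utility vector induced by the new $X$ and $\pi$. By the first claim, $Y$ is contained in some $q^*$-maximum base $Z$ of $\M_1$, and $X \subseteq Y \subseteq Z$. I would apply Lemma~\ref{lem:gs2} with $S = X$, with the roles $q \leftarrow q^*$ and $q' \leftarrow q$ (the current utility vector). Two hypotheses must be checked.
\begin{enumerate}
\item $q(e') \le q^*(e')$ for every $e'$. This is exactly Lemma~\ref{lem:monotone}.
\item $q(e') = q^*(e')$ for every $e' \in X$.
\end{enumerate}
I expect the second hypothesis to be the main obstacle, and I would split it into two cases.
\begin{itemize}
\item If $e' \in X \setminus X^*$, then $e'$ was bought in this round, and at that time $q(e') = q^*(e')$ by the test in Line~\ref{line:cond}. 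Afterwards $p(e') = \pi(e')$ is frozen as long as $e' \in X$, so $q(e')$ stays equal to $q^*(e')$. Equality at the time of buying follows from Observation~\ref{obs:basic}\ref{obs:charge}, and the frozen value persists because $\pi$ is not modified on elements remaining in $X$.
\item If $e' \in X \cap X^*$, then $e'$ was never removed. Its price $p(e') = \pi(e') = \pi^*(e')$ is unchanged, so $q(e') = q^*(e')$.
\end{itemize}
With both hypotheses verified, Lemma~\ref{lem:gs2} shows that $X$ is contained in a $q$-maximum base of $\M_1$. This completes the induction. Before the first round, $X = \emptyset$, so the base case is trivial.
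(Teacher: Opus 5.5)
Your proof is correct and takes essentially the same route as the paper. Both arguments establish $X\cup X^*\subseteq Y$ with $Y$ contained in a $q^*$-maximum base, verify $q=q^*$ on $X$, and conclude with Lemma~\ref{lem:monotone} and Lemma~\ref{lem:gs2}. The only difference in structure is that you check $q=q^*$ on $X$ by cases at the end of the round, while the paper carries it as an invariant across iterations.

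One small slip: your side remark $X=X^*-D+Y'$ with $D\subseteq X^*$ is inaccurate. An element bought earlier in a round can be removed later in the same round, so removed elements need not lie in $X^*$. This is harmless, because your argument only uses $X\subseteq Y$, and your case analysis correctly treats each $e'\in X\setminus X^*$ as bought and not removed since.
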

\begin{proof}
We show by induction on the rounds that $X$ is contained in a $q$-maximum base of $\M_1$ at the beginning of each round and that the following hold before and after each iteration of the inner for-loop in the round:
\begin{enumerate}[label=(\arabic*),leftmargin=*]
\item\label{inv:Y} $X\cup X^*\subseteq Y$, and $Y$ is contained in a $q^*$-maximum base of $\M_1$;
\item\label{inv:owner} $q(x)=q^*(x)$ for each $x\in X$.
\end{enumerate}
At the beginning of the first round, $X=\emptyset$ is contained in every base.

Consider a round at whose beginning $X$ is contained in a $q$-maximum base.
Then \ref{inv:Y} and~\ref{inv:owner} hold at the beginning, because $X=X^*=Y$ and $q=q^*$.
An iteration changes nothing unless Lines~\ref{line:buy}--\ref{line:buyend} are executed for the scanned element $e$.
We show that their execution preserves both invariants.
Invariant~\ref{inv:Y} is preserved, because $e$ is added to $Y$ in Line~\ref{line:buy} and $Y+e$ is contained in a $q^*$-maximum base of $\M_1$ by the condition in Line~\ref{line:cond}.
For invariant~\ref{inv:owner}, consider the set $X$ after the execution.
Every element $x\neq e$ of this set was already in $X$ and keeps its value $\pi(x)$, and hence its utility $q(x)=\rup{w(x)-\pi(x)}$.
% The utility $q(e)$ keeps the value it had just before the execution (Observation~\ref{obs:basic}\ref{obs:charge}), which is $q^*(e)$ by Line~\ref{line:cond}.
By Observation~\ref{obs:basic}\ref{obs:charge}, $p(e)$ is unchanged by the execution, and hence so is $q(e)=\rup{w(e)-p(e)}$.
Since $q(e)=q^*(e)$ held in Line~\ref{line:cond}, we have $q(e)=q^*(e)$ after the execution.

% At the end of the round, $X$ is contained in a $q^*$-maximum base by~\ref{inv:Y}, $q=q^*$ on $X$ by~\ref{inv:owner}, and $q\le q^*$ on $E$ by Lemma~\ref{lem:monotone}.
At the end of the round, $X$ is contained in a $q^*$-maximum base of $\M_1$ by~\ref{inv:Y}, $q(x)=q^*(x)$ for every $x\in X$ by~\ref{inv:owner}, and $q(e')\le q^*(e')$ for every $e'\in E$ by Lemma~\ref{lem:monotone}.
By Lemma~\ref{lem:gs2}, $X$ is contained in a $q$-maximum base of $\M_1$, which completes the induction.
Finally, $X\in\cI_1$ by~\ref{inv:Y} and $X\in\cI_2$ by Observation~\ref{obs:basic}\ref{obs:range}.
\end{proof}

\subsection{Weight-Splitting Argument}\label{sec:certificate}

In this subsection, we prove the following lemma.
It adapts the weight-splitting argument in~\cite[Section~4]{huang2026auction}, with $\bp$ and $q$ in the roles of $p'$ and $w-p$.
The proof shows that
\[
  (1-\eps)\cdot w(\OPT_w)\;\le\;\bp(\OPT_w)+q(\OPT_w)\;\le\;\pi(X)+\max_{B\in\cB_1}q(B)\;=\;U(X, \pi).
\]

\begin{lemma}\label{lem:upper}
At any point of Algorithm~\ref{alg:weighted}, we have $(1-\eps)\cdot w(\OPT_w)\le U(X, \pi)$.
\end{lemma}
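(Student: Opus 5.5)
We follow the chain displayed before the statement and prove its two inequalities separately:
\[
  (1-\eps)\, w(\OPT_w)\;\le\;\bp(\OPT_w)+q(\OPT_w)\;\le\;\pi(X)+\max_{B\in\cB_1}q(B).
\]

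\emph{First inequality.} We argue element by element. For every $e\in E$ we claim $\bp(e)+q(e)\ge (1-\eps)w(e)$. If $e\in X$, then $\bp(e)=p(e)=\pi(e)$. By~\ref{grid:round} we get $q(e)\ge w(e)-p(e)$, hence $\bp(e)+q(e)\ge w(e)$. If $e\notin X$, then $p(e)=\bp(e)+\eps w(e)$, and~\ref{grid:round} gives $q(e)\ge w(e)-\bp(e)-\eps w(e)$. Therefore $\bp(e)+q(e)\ge (1-\eps)w(e)$. Summing over $e\in\OPT_w$ gives the first inequality.

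\emph{Second inequality.} Since $\OPT_w\in\cI_1$, we have $q(\OPT_w)\le\max_{B\in\cB_1}q(B)$, because $q\ge 0$ and every independent set extends to a base. It therefore suffices to show $\bp(\OPT_w)\le\pi(X)$. For this we use the layer characterization~\eqref{eq:level-span} from the proof of Lemma~\ref{lem:monotone}. For each $z>0$, the elements $e\in\OPT_w$ with $\bp(e)\ge z$ are exactly $\OPT_w\cap\spn_{\M_2}(X_{\ge z})$. This set is independent in $\M_2$ (as $\OPT_w\in\cI_2$) and lies in the span of $X_{\ge z}$, so its size is at most $r_2(X_{\ge z})=|X_{\ge z}|$. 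We then write both sides as layer-cake integrals:
\[
  \bp(\OPT_w)=\int_0^\infty\bigl|\{e\in\OPT_w\mid\bp(e)\ge z\}\bigr|\,dz
  \;\le\;\int_0^\infty|X_{\ge z}|\,dz=\pi(X),
\]
using $\bp\ge0$ and $\pi\ge 0$ on $X$. Combining the two parts yields $(1-\eps)w(\OPT_w)\le U(X,\pi)$.

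The main obstacle is the comparison $\bp(\OPT_w)\le\pi(X)$. It needs the span characterization~\eqref{eq:level-span}, applied at all thresholds simultaneously, and the integral (or, equivalently, the sorted-threshold summation) must be set up cleanly. The remaining steps are direct consequences of~\ref{grid:round} and the definitions. Note that the argument holds at any point of the algorithm: it uses only $X\in\cI_2$, $\pi\ge0$, and the definitions of $\bp$ and $q$, all of which are guaranteed by Observation~\ref{obs:basic}.
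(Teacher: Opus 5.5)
Your proof is correct. The outer structure matches the paper's: the elementwise bound $(1-\eps)w(e)\le\bp(e)+q(e)$, and $q(\OPT_w)\le\max_{B\in\cB_1}q(B)$ from $q\ge0$. You differ in the key step $\bp(\OPT_w)\le\pi(X)$.

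The paper shows that $X$ is a $\bp$-maximum independent set of $\M_2$ by checking the three local optimality conditions of Lemma~\ref{lem:local}:
\begin{enumerate}
\item $\bp(e)=0$ whenever $X+e\in\cI_2$;
\item $\bp=\pi\ge0$ on $X$;
\item $\bp(e)\le\pi(f)$ whenever $X+e-f\in\cI_2$, which follows at once from defining $\bp(e)$ as a minimum over $C_2(X\mid e)-e$, together with Observation~\ref{obs:circuit}.
\end{enumerate}
It then reads off $\bp(\OPT_w)\le\bp(X)=\pi(X)$.

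You instead reuse the level-set characterization~\eqref{eq:level-span} from the proof of Lemma~\ref{lem:monotone}. You compare threshold by threshold, $|\{e\in\OPT_w\mid\bp(e)\ge z\}|\le r_2(X_{\ge z})=|X_{\ge z}|$, and then integrate. This is essentially the Edmonds-style level-set proof of greedy optimality, inlined for this particular weight vector.

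The paper's route is shorter once Lemma~\ref{lem:local} is available, and the paper needs that lemma anyway for Lemma~\ref{lem:exchange}. Your route avoids that black box: it uses only the fact that an independent set inside $\spn_{\M_2}(A)$ has at most $r_2(A)$ elements. It also makes the dominance at every threshold $z$ explicit.

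Both arguments in fact give $\bp(S)\le\pi(X)$ for every $S\in\cI_2$, not just for $\OPT_w$. Your integral setup is sound, since $\bp\ge0$ and $\pi\ge0$ on $X$ hold by Observation~\ref{obs:basic}.
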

\begin{proof}
We bound $\bp(\OPT_w)$ and $q(\OPT_w)$ separately, and then use that $\bp(e)+q(e)\ge(1-\eps)\,w(e)$ for every $e\in E$.

We first show that $\bp(\OPT_w)\le\pi(X)$.
Recall that $X\in\cI_2$ by Observation~\ref{obs:basic}\ref{obs:range}.
To see that $X$ is a $\bp$-maximum independent set of $\M_2$, we check the three conditions of Lemma~\ref{lem:local} with $w=\bp$ and $S=X$.
Condition~(i) holds, because $\bp(e)=0$ for every $e\notin X$ with $X+e\in\cI_2$.
Condition~(ii) holds, because $\bp(e)=\pi(e)\ge0$ for every $e\in X$.
For condition~(iii), let $e\notin X$ and $f\in X$ with $X+e-f\in\cI_2$.
If $X+e\in\cI_2$, then $\bp(e)=0\le\pi(f)$.
Otherwise, $f\in C_2(X\mid e)-e$ by Observation~\ref{obs:circuit}, and hence $\bp(e)\le\pi(f)$ by the definition of $\bp$.
In both cases, $\bp(e)\le\pi(f)=\bp(f)$.
Therefore, $X$ is a $\bp$-maximum independent set of $\M_2$.
Since $\OPT_w\in\cI_2$, we obtain $\bp(\OPT_w)\le\bp(X)=\pi(X)$.

We next show that $q(\OPT_w)\le\max_{B\in\cB_1}q(B)$.
This holds because $q(e)\ge0$ for every $e\in E$ and $\OPT_w\in\cI_1$ is contained in a base of $\M_1$.

Finally, we combine the two bounds.
For every $e\in E$, we have $p(e)\le\bp(e)+\eps w(e)$ by the definition of $\bp$, and $q(e)\ge w(e)-p(e)$ by~\ref{grid:round}.
Hence, $(1-\eps)\,w(e)\le\bp(e)+q(e)$ for every $e\in E$.
Summing over $e\in\OPT_w$ and using the two bounds, we obtain
\[
  (1-\eps)\cdot w(\OPT_w)\le\bp(\OPT_w)+q(\OPT_w)\le\pi(X)+\max_{B\in\cB_1}q(B)=U(X, \pi).\qedhere
\]
\end{proof}

\subsection{Progress in One Round}\label{sec:round}

Our analysis mainly tracks how $U(X,\pi)$ changes during the execution of Algorithm~\ref{alg:weighted}.
In this subsection, we bound the change of $U(X,\pi)$ in one round (Lemma~\ref{lem:round}).
We prove Lemma~\ref{lem:round} by applying Lemma~\ref{lem:exchange} to $\M_1$.
To apply it, we need to verify hypotheses~\ref{hyp:mono} and~\ref{hyp:drop}.
Hypothesis~\ref{hyp:mono} follows from Lemma~\ref{lem:monotone}, and the next lemma provides hypothesis~\ref{hyp:drop}.
 
\begin{lemma}\label{lem:drop}
Consider a round, and let $Y$ and $q'$ be the set $Y$ and the utility vector at the end of the round.
If $e\in E\setminus Y$ and $Y+e$ is contained in a $q^*$-maximum base of $\M_1$, then $q'(e)\le q^*(e)/(1+\eps)$.
\end{lemma}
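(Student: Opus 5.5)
Fix $e\in E\setminus Y$ (with $Y$ the final set of the round) such that $Y+e$ is contained in a $q^*$-maximum base of $\M_1$. The first step is to locate $e$ relative to $X^*$. Since $X^*\subseteq Y$ by invariant~\ref{inv:Y} of Lemma~\ref{lem:inv}, we have $e\in E\setminus X^*$. Hence $e$ was scanned exactly once in the inner loop, and at that time it was not bought, since $Y$ only grows and $e\notin Y$ at the end.

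The second step is to propagate the membership hypothesis back to the moment of scanning. Let $Y_e\subseteq Y$ denote the value of $Y$ when $e$ was scanned. Then $Y_e+e\subseteq Y+e$, so $Y_e+e$ is also contained in a $q^*$-maximum base of $\M_1$. Therefore the second part of the condition in Line~\ref{line:cond} held at the scan. Since $e$ was rejected, the first part must have failed, so either $q(e)\neq q^*(e)$ or $q^*(e)\le 0$ at that time.

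The third step splits into these two cases.
\begin{itemize}[leftmargin=*]
\item If $q^*(e)\le 0$, then $q^*(e)=0$ because $q^*\in\Lambda^E$. Lemma~\ref{lem:monotone} gives $0\le q'(e)\le q^*(e)=0$, so $q'(e)=0=q^*(e)/(1+\eps)$.
\item Otherwise the current utility $q_e(e)$ at the scan differs from $q^*(e)$. By Lemma~\ref{lem:monotone}, $q_e(e)<q^*(e)$. Both values lie in $\Lambda$, so~\ref{grid:gap} gives $q_e(e)\le q^*(e)/(1+\eps)$. Applying Lemma~\ref{lem:monotone} once more from the scan to the end of the round yields $q'(e)\le q_e(e)\le q^*(e)/(1+\eps)$.
\end{itemize}

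The only point needing care is the second step. We must pass from the end-of-round set $Y$ to the earlier set $Y_e$, and this works because containment of $Y+e$ in a maximum base is inherited by its subsets. We must also check that $q^*$ is fixed throughout the round, so the condition tested at the scan is the same condition as in the hypothesis. Beyond that, the argument is a direct combination of monotonicity (Lemma~\ref{lem:monotone}) with the grid-gap property~\ref{grid:gap}.
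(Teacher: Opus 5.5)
Your proof is correct and follows essentially the same route as the paper. You pull the hypothesis back to the set $Y$ at scanning time, using that $Y$ only grows and $q^*$ is fixed, conclude that the utility test must have failed, and finish with Lemma~\ref{lem:monotone} and~\ref{grid:gap}; the only cosmetic difference is that you split on how the test failed rather than on $q^*(e)=0$ versus $q^*(e)>0$, which is equivalent.
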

\begin{proof}
If $q^*(e)=0$, then $q'(e)=0$ by Lemma~\ref{lem:monotone} and $q'(e)\ge0$.
Suppose that $q^*(e)>0$.
Since $X^*\subseteq Y$, we have $e\in E\setminus X^*$, and hence the condition in Line~\ref{line:cond} was checked for $e$ in the round.
Let $Y_0$ be the set $Y$ at the time of this check.
Since $e\notin Y$ at the end of the round, Lines~\ref{line:buy}--\ref{line:buyend} were not executed for $e$, that is, the condition in Line~\ref{line:cond} did not hold for $e$.
On the other hand, since the set $Y$ only grows during the round, we have $Y_0\subseteq Y$, and hence $Y_0+e$ is contained in the $q^*$-maximum base of $\M_1$ that contains $Y+e$.
Therefore, $q(e)=q^*(e)>0$ did not hold at the time of the check.
Since $q^*(e)>0$, this means that $q(e)\neq q^*(e)$, and hence $q(e)<q^*(e)$ by Lemma~\ref{lem:monotone}.
By~\ref{grid:gap}, we obtain $q(e)\le q^*(e)/(1+\eps)$ because $q(e),q^*(e)\in\Lambda$.
Since $q(e)$ does not increase afterwards by Lemma~\ref{lem:monotone}, we obtain $q'(e)\le q^*(e)/(1+\eps)$.
\end{proof}

By applying Lemma~\ref{lem:exchange} to $\M_1$, we now bound the change of $U(X,\pi)$ in one round.
 
\begin{lemma}\label{lem:round}
Consider a round. Let $X^*$ and $X'$ be the sets $X$ at its beginning and its end, let $U$ and $U'$ be the values of $U$ at its beginning and its end, and let $\Delta:=\pi(X')-\pi^*(X^*)$ be the increase of $\pi(X)$ in the round. Then
\[
  U'\;\le\;\frac{U}{1+\eps}+\eps\,w(X^*)+2\Delta.
\]
\end{lemma}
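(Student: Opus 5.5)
The plan is to apply Lemma~\ref{lem:exchange} to $\M_1$ with $q:=q^*$, with $q'$ the utility vector at the end of the round, and with $Y$ the set $Y$ at the end of the round. The hypotheses are already available.
\begin{itemize}
\item $Y$ lies in a $q^*$-maximum base of $\M_1$ by Lemma~\ref{lem:inv}.
\item \ref{hyp:mono} holds by Lemma~\ref{lem:monotone}.
\item \ref{hyp:drop} is exactly Lemma~\ref{lem:drop}.
\end{itemize}
Writing $M^*:=\max_{B\in\cB_1}q^*(B)$, Lemma~\ref{lem:exchange} gives
\[
  \max_{B\in\cB_1}q'(B)\;\le\;\frac{M^*}{1+\eps}+\frac{\eps}{1+\eps}\,q^*(Y).
\]
Since $U=\pi^*(X^*)+M^*$, this yields
\[
  U'=\pi(X')+\max_{B\in\cB_1}q'(B)\;\le\;\frac{U}{1+\eps}+\Delta+\frac{\eps}{1+\eps}\bigl(\pi^*(X^*)+q^*(Y)\bigr).
\]
It therefore remains to show
\[
  \frac{\eps}{1+\eps}\bigl(\pi^*(X^*)+q^*(Y)\bigr)\le \eps\,w(X^*)+\Delta .
\]

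To prove this, split $Y$ into $X^*$ and $Y\setminus X^*$. This uses invariant~\ref{inv:Y} from the proof of Lemma~\ref{lem:inv}, namely $X^*\subseteq Y$. Also, $Y\setminus X^*$ is exactly the set of elements $e$ for which Lines~\ref{line:buy}--\ref{line:buyend} were executed in this round.

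For $x\in X^*$ we have $p^*(x)=\pi^*(x)<w(x)$ by Observation~\ref{obs:basic}\ref{obs:range}. Hence~\ref{grid:round} gives $q^*(x)\le(1+\eps)(w(x)-\pi^*(x))$, and so
\[
  \frac{\eps}{1+\eps}\bigl(\pi^*(x)+q^*(x)\bigr)\le \frac{\eps}{1+\eps}\pi^*(x)+\eps\bigl(w(x)-\pi^*(x)\bigr)\le \eps w(x).
\]
Summing over $x\in X^*$ bounds the $X^*$ part by $\eps\,w(X^*)$.

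For a bought element $e\in Y\setminus X^*$, we have $q^*(e)\le(1+\eps)\max\{w(e)-p^*(e),0\}\le(1+\eps)w(e)$, using $p^*\ge 0$. Thus $\frac{\eps}{1+\eps}q^*(e)\le\eps w(e)$. By Observation~\ref{obs:basic}\ref{obs:charge}, each such purchase increases $\pi(X)$ by exactly $\eps w(e)$, and $\pi(X)$ changes only at purchases. Hence $\sum_{e\in Y\setminus X^*}\eps w(e)=\Delta$, which bounds the remaining part by $\Delta$.

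Combining the two parts gives the claimed inequality $U'\le U/(1+\eps)+\eps\,w(X^*)+2\Delta$. The only delicate step is the bookkeeping that identifies $Y\setminus X^*$ with the set of purchases, so that their $\eps w(e)$ charges sum to exactly $\Delta$. The rest is a direct application of Lemma~\ref{lem:exchange}.
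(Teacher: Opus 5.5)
Your proof is correct and follows the paper's argument essentially step for step. You use the same application of Lemma~\ref{lem:exchange} to $\M_1$ with $q^*$, $q'$, and the final $Y$. You then use the same per-element bound over $X^*$ and $Y\setminus X^*$; the paper phrases it as $\pi^*(X^*)+q^*(Y)\le(1+\eps)\,w(Y)$ before multiplying by $\eps/(1+\eps)$. Finally, $\Delta=\eps\,w(Y\setminus X^*)$ comes from Observation~\ref{obs:basic}\ref{obs:charge}, as in the paper. The one hypothesis you leave implicit, $q^*\in\Lambda^E$, holds by the definition of $q^*$.
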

\begin{proof}
Let $q'$ and $Y$ be the utility vector and the set $Y$ at the end of the round, let $A:=Y\setminus X^*$ be the set of elements added to $Y$ in the round, and let $P:=\pi^*(X^*)$.
Thus $U=P+\max_{B\in\cB_1}q^*(B)$ and $U'=P+\Delta+\max_{B\in\cB_1}q'(B)$.
Since Lines~\ref{line:buy}--\ref{line:buyend} are executed exactly once for each element of $A$, we have $\Delta=\eps w(A)$ by Observation~\ref{obs:basic}\ref{obs:charge}.
 
We first apply Lemma~\ref{lem:exchange} to $\M_1$, $q^*$, $q'$, and $Y$.
Its assumptions are satisfied as follows.
We have $q^*\in\Lambda^E$ by the definition of $q^*$.
The set $Y$ is contained in a $q^*$-maximum base of $\M_1$ by Lemma~\ref{lem:inv}.
Hypothesis~\ref{hyp:mono} holds by Lemma~\ref{lem:monotone}, and hypothesis~\ref{hyp:drop} holds by Lemma~\ref{lem:drop}.
Hence, Lemma~\ref{lem:exchange} gives
\[
  \max_{B\in\cB_1}q'(B)\;\le\;q^*(Y)+\frac{U-P-q^*(Y)}{1+\eps}.
\]
Adding $P+\Delta$ to both sides, we obtain
\begin{equation}\label{eq:round-mid}
  U'\;\le\;\frac{U}{1+\eps}+\Delta+\frac{\eps}{1+\eps}\bigl(P+q^*(Y)\bigr).
\end{equation}
 
We next show that
\begin{equation}\label{eq:Y-cheap}
  P+q^*(Y)\;\le\;(1+\eps)\,w(Y).
\end{equation}
By~\ref{grid:round}, $q^*(y)\le(1+\eps)\cdot\max\{w(y)-p^*(y),0\}$ for every $y\in E$.
For $y\in X^*$, we have $p^*(y)=\pi^*(y)<w(y)$ by Observation~\ref{obs:basic}\ref{obs:range}, and hence $q^*(y)\le(1+\eps)\bigl(w(y)-\pi^*(y)\bigr)$.
Adding $\pi^*(y)$ to both sides, we obtain $\pi^*(y)+q^*(y)\le(1+\eps)\,w(y)-\eps\pi^*(y)\le(1+\eps)\,w(y)$, where the last inequality follows from $\pi^*(y)\ge0$.
For $y\in A$, we have $q^*(y)\le(1+\eps)\,w(y)$ because $p^*(y)\ge0$.
Summing these inequalities over $Y=X^*\cup A$ gives~\eqref{eq:Y-cheap}, because $P=\pi^*(X^*)$.
 
Finally, by~\eqref{eq:Y-cheap} and $\Delta=\eps w(A)$, we have
\[
  \frac{\eps}{1+\eps}\bigl(P+q^*(Y)\bigr)\;\le\;\eps\,w(Y)\;=\;\eps\,w(X^*)+\eps\,w(A)\;=\;\eps\,w(X^*)+\Delta.
\]
Substituting this into~\eqref{eq:round-mid}, we obtain
\[
  U'\;\le\;\frac{U}{1+\eps}+\eps\,w(X^*)+2\Delta,
\]
which completes the proof.
\end{proof}

\subsection{Approximation Ratio}\label{sec:ratio}

In this subsection, we show that Algorithm~\ref{alg:weighted} with $T:=\bigl\lceil 10\,(\eps^{-1}\log (r_1(E))+\eps^{-2})\bigr\rceil$ rounds returns a $(1-3\eps)$-approximate solution.

We first explain a rough idea of the proof.
Suppose that a round starts with $w(X)<(1-3\eps)\,w(\OPT_w)$.
Let $g$ and $g'$ be the values of $U-(1-2\eps)\,w(\OPT_w)$, and let $P$ and $P'$ be the values of $\pi(X)$, at the beginning and the end of the round.
Then Lemma~\ref{lem:round} yields
\[
  g'\;\le\;\frac{g}{1+\eps}+2(P'-P).
\]
On the other hand, by Lemma~\ref{lem:upper}, we have $U\ge(1-\eps)\,w(\OPT_w)$ at any point, and hence
\[
  U-(1-2\eps)\,w(\OPT_w)\;\ge\;(1-\eps)\,w(\OPT_w)-(1-2\eps)\,w(\OPT_w)\;=\;\eps\,w(\OPT_w).
\]
In particular, $g\ge\eps\,w(\OPT_w)$, and hence $2(P'-P)\le\frac{2(P'-P)}{\eps\,w(\OPT_w)}\,g$.
Substituting this into the inequality $g'\le\frac{g}{1+\eps}+2(P'-P)$ above, we obtain
\[
  g'\;\le\;\biggl(\frac{1}{1+\eps}+\frac{2(P'-P)}{\eps\,w(\OPT_w)}\biggr)\,g\;\le\;\frac{g}{1+\eps}\,e^{O((P'-P)/(\eps\,w(\OPT_w)))}.
\]
Now suppose that the first $T$ rounds all start with $w(X)<(1-3\eps)\,w(\OPT_w)$.
For $t\in\{0,1,\dots,T\}$, let $g_t$ and $P^t$ be the values of $U-(1-2\eps)\,w(\OPT_w)$ and $\pi(X)$ at the end of round $t$, where the end of round $0$ refers to the beginning of the algorithm.
Applying the above inequality to rounds $t=T,T-1,\dots,1$, we obtain
\[
  g_T\;\le\;g_0\cdot(1+\eps)^{-T}\exp\Biggl(O\Biggl(\frac{1}{\eps\,w(\OPT_w)}\sum_{t=1}^{T}(P^t-P^{t-1})\Biggr)\Biggr).
\]
Since $P^0=\pi(\emptyset)=0$, the sum equals $P^T$.
Moreover, $P^T\le w(X^T)\le w(\OPT_w)$, and $g_T\ge\eps\,w(\OPT_w)$ as shown above.
Hence
\[
  \eps\,w(\OPT_w)\;\le\;g_0\cdot(1+\eps)^{-T}\cdot e^{O(1/\eps)}.
\]
Since $g_0<r_1(E)\,w(\OPT_w)$, the right-hand side is less than $\eps\,w(\OPT_w)$ for our choice of $T$, a contradiction.

We now formalize the above argument.
Instead of deriving a contradiction, the proof below argues directly on the first $t$ with $w(X^t)\ge(1-3\eps)\,w(\OPT_w)$.
 
\begin{proposition} \label{prop:approx}
Let $T:=\bigl\lceil 10\,(\eps^{-1}\log (r_1(E))+\eps^{-2})\bigr\rceil$.
Then, Algorithm~\ref{alg:weighted} returns a common independent set $X$ with $w(X)\ge(1-3\eps)\,w(\OPT_w)$.
\end{proposition}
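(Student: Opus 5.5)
We formalize the sketch given just before the statement. Let $W:=w(\OPT_w)$, and for each round $t$ let $U_t$, $P^t:=\pi(X^t)$ and $g_t:=U_t-(1-2\eps)W$ denote the values at the end of round $t$; round $0$ refers to the initial state $X^0=\emptyset$. If some $t\in\{0,\dots,T-1\}$ has $w(X^t)\ge(1-3\eps)W$, we are done, because the algorithm returns the best $X^t$. So suppose $w(X^{t-1})<(1-3\eps)W$ for every $t=1,\dots,T$. We will derive a contradiction, and in fact we may argue only on these rounds.

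\emph{Per-round recurrence.} Apply Lemma~\ref{lem:round} to round $t$, writing $\Delta_t=P^t-P^{t-1}$:
\[
U_t\le \frac{U_{t-1}}{1+\eps}+\eps w(X^{t-1})+2\Delta_t .
\]
Subtract $(1-2\eps)W$ from both sides. Since $(1-2\eps)W-\frac{(1-2\eps)W}{1+\eps}=\frac{\eps(1-2\eps)W}{1+\eps}$, we get
\[
g_t\le \frac{g_{t-1}}{1+\eps}+\eps w(X^{t-1})-\frac{\eps(1-2\eps)}{1+\eps}W+2\Delta_t .
\]
Using $w(X^{t-1})<(1-3\eps)W$, the middle terms combine to $\eps W\bigl((1-3\eps)-\frac{1-2\eps}{1+\eps}\bigr)$, which equals $\eps W\cdot\frac{-2\eps-3\eps^2}{1+\eps}$ and is therefore negative. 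This is exactly why the thresholds $3\eps$ and $2\eps$ are used. Hence $g_t\le g_{t-1}/(1+\eps)+2\Delta_t$.

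By Lemma~\ref{lem:upper}, $g_{t-1}\ge\eps W>0$. Since $\Delta_t\ge0$ (Observation~\ref{obs:basic}\ref{obs:charge} shows that $\pi(X)$ only increases), we obtain
\[
g_t\le g_{t-1}\Bigl(\tfrac1{1+\eps}+\tfrac{2\Delta_t}{\eps W}\Bigr)\le \frac{g_{t-1}}{1+\eps}\exp\Bigl(\tfrac{2(1+\eps)\Delta_t}{\eps W}\Bigr),
\]
where the last step uses $1+x\le e^x$.

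\emph{Telescoping.} Multiplying over $t=1,\dots,T$ and using $\sum_t\Delta_t=P^T-P^0=P^T\le w(X^T)\le W$ (by Observation~\ref{obs:basic}\ref{obs:range} and $X^T\in\cI_1\cap\cI_2$ from Lemma~\ref{lem:inv}) gives
\[
\eps W\le g_T\le g_0(1+\eps)^{-T}e^{2(1+\eps)/\eps}\le g_0(1+\eps)^{-T}e^{4/\eps}.
\]
For $g_0$: at $X=\emptyset$ we have $\pi(X)=0$ and $q(e)\le(1+\eps)w(e)$. A base has at most $r_1(E)$ elements, each of weight at most $W$ (no loops), so $U_0\le(1+\eps)r_1(E)W$. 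Consequently $g_0\le(1+\eps)r_1(E)W\le 2r_1(E)W$.

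\emph{The main obstacle is the final numeric check.} We need
\[
(1+\eps)^{-T}\cdot 2r_1(E)e^{4/\eps}<\eps ,
\]
which is equivalent to $T\log(1+\eps)>\log(2r_1(E))+4/\eps+\log(1/\eps)$. Using $\log(1+\eps)\ge\eps/2$ for $\eps\in(0,1)$, it suffices that $T\ge 2\eps^{-1}\bigl(\log r_1(E)+\log 2+4\eps^{-1}+\log\eps^{-1}\bigr)$. Since $\log 2+\log\eps^{-1}\le 2\eps^{-1}$, the right-hand side is at most $2\eps^{-1}\log r_1(E)+12\eps^{-2}$. Our choice $T\ge10(\eps^{-1}\log r_1(E)+\eps^{-2})$ does not quite give $12\eps^{-2}$, so the constants need tightening. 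For $\eps\le1/2$ one has $\log(1+\eps)\ge 2\eps/3$, which brings the requirement to $1.5\eps^{-1}\log r_1(E)+9\eps^{-2}$, within the budget. For larger $\eps$ the claim is weaker, or one can simply rescale $\eps$. If $r_1(E)=1$ the $\log$ term vanishes, and the same bound still holds. This contradicts $g_T\ge\eps W$, so some $X^t$ with $t<T$ satisfies $w(X^t)\ge(1-3\eps)W$, and the returned set is at least as good.
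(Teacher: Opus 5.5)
Your proof is correct and is essentially the paper's argument: the same recurrence from Lemma~\ref{lem:round}, the same telescoping with $P^T\le w(\OPT_w)$, and the same final numeric check. The paper argues on the first good round $m$ rather than by contradiction. It also uses $p=\eps w$ at the start to get $q(e)=\rup{(1-\eps)w(e)}<w(e)$, hence $g_0<r_1(E)\,w(\OPT_w)$, which is why $\log(1+\eps)\ge\eps/2$ already suffices there.

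Two small corrections. First, the combined middle term equals $\eps\, w(\OPT_w)\cdot\frac{-3\eps^2}{1+\eps}$, not $\eps\, w(\OPT_w)\cdot\frac{-2\eps-3\eps^2}{1+\eps}$; it is still nonpositive, so the recurrence stands. Second, large $\eps$ should be handled by noting that the claim is trivial for $\eps\ge 1/3$, because $1-3\eps\le 0$. Rescaling $\eps$ does not work, since $T$ and the algorithm are tied to the given $\eps$.
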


\begin{proof}
The output is a common independent set by Lemma~\ref{lem:inv}.
Let $\optw:=w(\OPT_w)$.
We assume that $\eps<1/3$ without loss of generality.
For $t\in\{0,1,\dots,T\}$, let $P^t$ and $U^t$ be the values of $\pi(X)$ and $U$ at the end of round $t$, where the end of round $0$ refers to the beginning of the algorithm.
Let
\[
  g_t:=U^t-(1-2\eps)\cdot\optw\qquad(t\in\{0,1,\dots,T\}).
\]
By Lemma~\ref{lem:upper}, we have $U^t\ge(1-\eps)\cdot\optw$ for every $t\in\{0,1,\dots,T\}$.
Hence, for every $t\in\{0,1,\dots,T\}$,
\[
  g_t\;\ge\;(1-\eps)\cdot\optw-(1-2\eps)\cdot\optw\;=\;\eps \cdot  \optw.
\]
 
Let
\[
  m:=\min\{\,t\in\{0,1,\dots,T\}\mid w(X^t)\ge(1-3\eps)\cdot\optw\,\},
\]
where we set $m:=T$ if no such $t$ exists.
In both cases, $w(X^t)<(1-3\eps)\cdot\optw$ holds for every $t\in\{0,1,\dots,m-1\}$.
In the rest of the proof, we show that $m<T$.
This suffices to prove the proposition.
Indeed, if $m<T$, then $w(X^m)\ge(1-3\eps)\cdot\optw$ by the definition of $m$, and the algorithm returns $X^t$ maximizing $w(X^t)$ over $t\in\{0,1,\dots,T\}$.
 
We first show that
\begin{equation}\label{eq:gap-rec}
  g_t\;\le\;\frac{g_{t-1}}{1+\eps}+2(P^t-P^{t-1})\qquad(t\in\{1,\dots,m\}).
\end{equation}
Fix $t\in\{1,\dots,m\}$, and apply Lemma~\ref{lem:round} to round $t$, in which $X^*=X^{t-1}$, $U=U^{t-1}$, $U'=U^t$, and $\Delta=P^t-P^{t-1}$.
By the definition of $m$, we have $w(X^{t-1})<(1-3\eps)\cdot\optw$.
Hence Lemma~\ref{lem:round} gives
\[
  U^t\;\le\;\frac{U^{t-1}}{1+\eps}+\eps\,w(X^{t-1})+2(P^t-P^{t-1})\;\le\;\frac{U^{t-1}}{1+\eps}+\eps\,(1-3\eps)\cdot\optw+2(P^t-P^{t-1}).
\]
Moreover, since $(1+\eps)(1-3\eps)=1-2\eps-3\eps^2\le1-2\eps$, we have
\[
  \eps\,(1-3\eps)\cdot\optw\;\le\;\frac{\eps}{1+\eps}\,(1-2\eps)\cdot\optw\;=\;(1-2\eps)\cdot\optw-\frac{(1-2\eps)\cdot\optw}{1+\eps}.
\]
Combining these two inequalities and subtracting $(1-2\eps)\cdot\optw$ from both sides, we obtain
\[
  g_t\;\le\;\frac{U^{t-1}-(1-2\eps)\cdot\optw}{1+\eps}+2(P^t-P^{t-1})\;=\;\frac{g_{t-1}}{1+\eps}+2(P^t-P^{t-1}),
\]
which is~\eqref{eq:gap-rec}.
 
We next show that
\begin{equation}\label{eq:gap-prod}
  g_m\;\le\;g_0\cdot(1+\eps)^{-m}\exp\biggl(\frac{2(1+\eps)}{\eps\cdot\optw}\,P^m\biggr).
\end{equation}
Fix $t\in\{1,\dots,m\}$.
Recall that $g_{t-1}\ge\eps \cdot\optw$, which we derived above from Lemma~\ref{lem:upper}.
We also have $P^t-P^{t-1}\ge0$, because $\pi(X)$ changes only when Lines~\ref{line:buy}--\ref{line:buyend} are executed, and each execution increases $\pi(X)$ by Observation~\ref{obs:basic}\ref{obs:charge}.
Hence,
\[
  2(P^t-P^{t-1})\;=\;\frac{g_{t-1}}{1+\eps}\cdot\frac{2(1+\eps)}{g_{t-1}}\,(P^t-P^{t-1})\;\le\;\frac{g_{t-1}}{1+\eps}\cdot\frac{2(1+\eps)}{\eps \cdot\optw}\,(P^t-P^{t-1}).
\]
Substituting this into~\eqref{eq:gap-rec} and using $1+x\le e^x$, we obtain
\[
  g_t\;\le\;\frac{g_{t-1}}{1+\eps}\biggl(1+\frac{2(1+\eps)}{\eps \cdot\optw}\,(P^t-P^{t-1})\biggr)\;\le\;\frac{g_{t-1}}{1+\eps}\exp\biggl(\frac{2(1+\eps)}{\eps \cdot\optw}\,(P^t-P^{t-1})\biggr).
\]
Applying this inequality repeatedly for $t=m,m-1,\dots,1$, we obtain
\[
  g_m\;\le\;g_0\cdot(1+\eps)^{-m}\exp\Biggl(\frac{2(1+\eps)}{\eps \cdot\optw}\sum_{t=1}^{m}(P^t-P^{t-1})\Biggr).
\]
Since $\sum_{t=1}^{m}(P^t-P^{t-1})=P^m-P^0$ and $P^0=\pi(\emptyset)=0$, this proves~\eqref{eq:gap-prod}.

We next show that $P^m\le \optw$.
By Observation~\ref{obs:basic}\ref{obs:range}, we have $P^m=\pi(X^m)\le w(X^m)$.
Moreover, $X^m\in\cI_1\cap\cI_2$ by Lemma~\ref{lem:inv}, and $\optw$ is the maximum weight of a common independent set.
Hence, $w(X^m)\le \optw$, and therefore $P^m\le \optw$.
Recall that $g_m\ge\eps \cdot\optw$, which we derived above from Lemma~\ref{lem:upper}.
Substituting this and $P^m\le \optw$ into~\eqref{eq:gap-prod}, we obtain
\begin{equation}\label{eq:gap-final}
  \eps \cdot\optw\;\le\;g_m\;\le\;g_0\cdot(1+\eps)^{-m}\,e^{2(1+\eps)/\eps}.
\end{equation}
 
We next show that $g_0<r_1(E)\,W$.
Since $(1-2\eps)\cdot\optw>0$ by $\eps<1/3$, we have $g_0=U^0-(1-2\eps)\cdot\optw\le U^0$.
At the beginning of the algorithm, we have $X=\emptyset$, and hence $p(e)=\eps w(e)$ for every $e\in E$.
Therefore, every $e\in E$ satisfies
\[
  q(e)\;=\;\rup{(1-\eps)\,w(e)}\;\le\;(1+\eps)(1-\eps)\,w(e)\;<\;w(e)\;\le\;\optw.
\]
Here, the first inequality follows from~\ref{grid:round} and the last from $w(e)\le w(\OPT_w)$.
Since $\pi(\emptyset)=0$ and every base of $\M_1$ consists of $r_1(E)$ elements, we obtain
\[
  g_0\;\le\;U^0\;=\;\max_{B\in\cB_1}q(B)\;<\;r_1(E)\cdot\optw.
\]
 
Finally, substituting $g_0<r_1(E)\cdot\optw$ into~\eqref{eq:gap-final}, we obtain
\[
  \eps \cdot\optw\;<\;r_1(E)\cdot\optw\cdot(1+\eps)^{-m}\,e^{2(1+\eps)/\eps}.
\]
Dividing both sides by $\optw$ and multiplying both sides by $(1+\eps)^m/\eps$, we obtain
\[
  (1+\eps)^m\;<\;\frac{r_1(E)}{\eps}\,e^{2(1+\eps)/\eps}.
\]
Taking logarithms of both sides, we obtain
\[
  m\log(1+\eps)\;<\;\log (r_1(E))+\log\frac{1}{\eps}+\frac{2(1+\eps)}{\eps}.
\]
Since $0<\eps<1/3$, we have $\log(1+\eps)\ge\eps/2$, $\log(1/\eps)<1/\eps$, and $2(1+\eps)/\eps<3/\eps$.
Hence
\[
  m\;<\;\frac{2}{\eps}\Bigl(\log (r_1(E))+\frac{1}{\eps}+\frac{3}{\eps}\Bigr)\;=\;2\eps^{-1}\log (r_1(E))+8\eps^{-2}\;\le\;10\,(\eps^{-1}\log (r_1(E))+\eps^{-2})\;\le\;T,
\]
where the last inequality follows from the definition $T=\bigl\lceil 10\,(\eps^{-1}\log (r_1(E))+\eps^{-2})\bigr\rceil$.
Therefore, $m<T$, which completes the proof.
\end{proof}

\section{Query Complexity}\label{sec:query}

In this section, we show how to implement each round of Algorithm~\ref{alg:weighted} with $O(n\log r)$ independence queries, and prove Theorem~\ref{thm:main}.
Throughout a round, we maintain a $q^*$-maximum base $Z$ of $\M_1$ with $Y\subseteq Z$.
Using $Z$ and Lemma~\ref{lem:test}, we check the condition in Line~\ref{line:cond} with $O(\log r)$ independence queries by the binary search technique (Lemma~\ref{lem:bs}).

\begin{lemma}\label{lem:query}
Each round of Algorithm~\ref{alg:weighted} can be implemented using $O(n\log r)$ independence queries.
\end{lemma}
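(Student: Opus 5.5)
The plan is to implement each round by maintaining explicit data: the set $X$ with its values $\pi$, the set $Y$, and a $q^*$-maximum base $Z$ of $\M_1$ with $Y\subseteq Z$. We also store, for every $e\in E$, the value $\bp(e)$ (equivalently $p(e)$), so that $q(e)$ and $q^*(e)$ are available without queries.

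\emph{Start of the round.} We first compute $\bp$ for all elements. For $e\in X$ we have $\bp(e)=\pi(e)$. For $e\notin X$ we test whether $X+e\in\cI_2$ with one query. If it is not, we apply Lemma~\ref{lem:bs} to $\M_2$ with weight $\pi$, $S=X$, $R=X$ and $v=e$; by Observation~\ref{obs:circuit} this returns $\min\{\pi(f)\mid f\in C_2(X\mid e)-e\}$ using $O(\log r)$ queries. This gives $q^*$ with $O(n\log r)$ queries in total.

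Next we compute $Z$. By Lemma~\ref{lem:inv}, $X^*$ lies in a $q^*$-maximum base, so we run the greedy algorithm on $\M_1$ in nonincreasing order of $q^*$. Ties are broken so that elements of $X^*$ come first within each value class; by a standard exchange argument the resulting base contains $X^*$. This costs $O(n)$ queries.

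\emph{Processing a scanned element $e\in E\setminus X^*$.}
\begin{itemize}
\item \emph{Utility test.} We need the current $q(e)$. Since $\pi$ changes in the round, we recompute $\bp(e)$ at this moment using one $\M_2$ test plus one binary search (Lemma~\ref{lem:bs}), at $O(\log r)$ queries. Because $e$ is scanned only once, computing it fresh is enough.
\item \emph{Base test.} If $q(e)=q^*(e)>0$, we use Lemma~\ref{lem:test} with $Y\subseteq Z$. If $e\in Z$, we accept immediately with $e'=e$. Otherwise we need some $e'\in Z\setminus Y$ with $q^*(e')=q^*(e)$ and $Z-e'+e\in\cB_1$. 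We apply Lemma~\ref{lem:bs} to $\M_1$ with $S=Z$, $v=e$ and $R=\{f\in Z\setminus Y\mid q^*(f)=q^*(e)\}$, using any weight; it either finds such an $e'$ or reports none, with $O(\log r)$ queries. (If $Z+e\in\cI_1$ cannot happen because $Z$ is a base, and $r_1(E)\le 2r$ keeps $|Z|=O(r)$.)
\item \emph{Updates.} On success we set $Z\gets Z-e'+e$. This is still a $q^*$-maximum base, and it contains $Y+e$ because $e'\notin Y$. Then $Y\gets Y+e$ and $\pi(e)\gets p(e)$. For Line~\ref{line:exchange}, the cheapest $f\in C_2(X\mid e)-e$ is obtained by one more binary search in $\M_2$ (Lemma~\ref{lem:bs} with weight $\pi$ and $R=X$), at $O(\log r)$ queries.
\end{itemize}

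Every scanned element therefore costs $O(\log r)$ queries, so each round uses $O(n\log r)$ queries.

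\emph{Main obstacle.} The delicate point is that $q^*$ is frozen while $X$ and $\pi$ evolve, and the condition in Line~\ref{line:cond} needs both the current $q(e)$ and the frozen $q^*(e)$. Storing $q^*$ at the start of the round and recomputing $q(e)$ when $e$ is scanned handles this.

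A second point to check is that $Z$ stays a $q^*$-maximum base containing $Y$. Only the exchange $Z-e'+e$ with $q^*(e')=q^*(e)$ and $e'\notin Y$ is ever performed, which preserves both properties.

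Finally, the greedy tie-breaking claim must be verified, namely that preferring $X^*$ within ties yields a maximum base containing $X^*$. This follows from Lemma~\ref{lem:gs2}: perturb $q^*$ down slightly on $E\setminus X^*$, then run greedy on the perturbed weights.
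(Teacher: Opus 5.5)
Your proposal is correct and follows the paper's proof in all essentials. You compute $q^*$ with $O(n\log r)$ queries and build a $q^*$-maximum base $Z\supseteq X^*$ of $\M_1$ with $O(n)$ queries. For each scanned $e$ you recompute $q(e)$, decide Line~\ref{line:cond} via Lemma~\ref{lem:test} plus one binary search (Lemma~\ref{lem:bs}) over $Z\setminus Y$, and update $Z\gets Z-e'+e$. Restricting $R$ to the value class $q^*(e)$ instead of minimizing $q^*$, and redoing the search for $f$ instead of reusing it, are harmless variants.

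The one place the paper is cleaner is the construction of $Z$. It starts from $Z=X^*$ and greedily extends in non-increasing $q^*$-order (greedy on $\M_1/X^*$), so $Z$ maximizes $q^*$ over bases containing $X^*$, and Lemma~\ref{lem:inv} then makes it $q^*$-maximum outright. Your tie-breaking version needs one step beyond Lemma~\ref{lem:gs2}: that greedy actually picks up every element of $X^*$, not merely that some $q'$-maximum base contains $X^*$. That step is true (within each value class the elements of $X^*$ stay independent after contracting the higher classes), but your sketch does not state it.
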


\begin{proof}
We first show that, given $X$ and $\pi$, the price $p(e)$ and the utility $q(e)$ of an element $e\in E$ can be computed with $O(\log r)$ independence queries.
For $e\in X$, we have $p(e)=\pi(e)$, which is stored.
For $e\notin X$, one query decides whether $X+e\in\cI_2$.
If $X+e\in\cI_2$, then $p(e)=\eps w(e)$.
Otherwise, by Observation~\ref{obs:circuit}, an element $f\in C_2(X\mid e)-e$ minimizing $\pi(f)$ is an element $f\in X$ minimizing $\pi(f)$ subject to $X+e-f\in\cI_2$, and hence it can be found by Lemma~\ref{lem:bs} with $S=R=X$, $v=e$, and $w=\pi$ using $O(\log|X|)=O(\log r)$ queries, where $|X|\le r$ because $X\in\cI_1\cap\cI_2$.
Then, $p(e)=\pi(f)+\eps w(e)$, and $q(e)$ is obtained from $p(e)$ by~\eqref{eq:utility}.
In the same way, $q^*(e)$ can be computed from $X^*$ and $\pi^*$ with $O(\log r)$ queries.

We next describe the initialization of a round.
We compute $q^*(e)$ for all $e\in E$ with $O(n\log r)$ queries.
We then compute a base $Z$ of $\M_1$ as follows.
Starting from $Z=X^*$, we scan the elements of $E\setminus X^*$ in non-increasing order of $q^*$, and add each scanned element $e$ to $Z$ if $Z+e\in\cI_1$.
This requires $n$ queries.
This procedure is the greedy algorithm for the contraction $\M_1/X^*$, and hence $Z\setminus X^*$ is a $q^*$-maximum base of $\M_1/X^*$.
Let $B$ be any base of $\M_1$ containing $X^*$.
Then $B\setminus X^*$ is a base of $\M_1/X^*$, and hence $q^*(B)=q^*(X^*)+q^*(B\setminus X^*)\le q^*(X^*)+q^*(Z\setminus X^*)=q^*(Z)$.
That is, $Z$ maximizes $q^*$ among the bases of $\M_1$ containing $X^*$.
By Lemma~\ref{lem:inv}, there exists a $q^*$-maximum base $B_0$ of $\M_1$ that contains $X^*$.
Since $B_0$ is a base of $\M_1$ containing $X^*$, we have $q^*(Z)\ge q^*(B_0)=\max_{B\in\cB_1}q^*(B)$.
Hence, $Z$ is a $q^*$-maximum base of $\M_1$.
We store the values $q^*(e)$ for $e\in Z$.

We next show how to test the condition in Line~\ref{line:cond} for the scanned element $e\in E\setminus X^*$ with $O(\log r)$ queries.
We maintain the invariant that $Z$ is a $q^*$-maximum base of $\M_1$ containing $Y$.
Note that $e\notin Y$ at this time.
Indeed, $Y=X^*\cup A'$, where $A'$ is the set of elements added to $Y$ in Line~\ref{line:buy} earlier in this round.
We have $e\notin X^*$, and $e\notin A'$ because $e$ has not been scanned before in this round.
We first compute $q(e)$ and $q^*(e)$ with $O(\log r)$ queries, and we may assume that $q(e)=q^*(e)>0$, since otherwise the condition does not hold.
If $e\in Z$, then $Y+e\subseteq Z$, and hence the condition holds.
Suppose that $e\notin Z$.
By Lemma~\ref{lem:test}, $Y+e$ is contained in a $q^*$-maximum base of $\M_1$ if and only if there exists $e'\in Z\setminus Y$ such that $Z-e'+e\in\cI_1$ and $q^*(e')=q^*(e)$.
Since $Z$ is a $q^*$-maximum base, every $e'\in Z\setminus Y$ with $Z-e'+e\in\cI_1$ satisfies $q^*(e')\ge q^*(e)$.
Hence, it suffices to find $e'\in Z\setminus Y$ minimizing $q^*(e')$ subject to $Z-e'+e\in\cI_1$ and to check whether $q^*(e')=q^*(e)$.
Such $e'$ can be found by Lemma~\ref{lem:bs} with $S=Z$, $R=Z\setminus Y$, $v=e$, and $w=q^*$ using $O(\log|Z|)=O(\log r)$ queries, where $|Z|=r_1(E)\le2r$.
One more query checks whether the element $e'$ found satisfies $Z-e'+e\in\cI_1$, and if it does not, then no such $e'$ exists and the condition does not hold.
If the condition holds and $e\notin Z$, we replace $Z$ by $Z-e'+e$, which is again a $q^*$-maximum base of $\M_1$ and contains $Y+e$.
Hence, the invariant is maintained.

Finally, Lines~\ref{line:pi}--\ref{line:buyend} require the value $p(e)$ and, if $X+e\notin\cI_2$, the element $f$.
Both were obtained when $q(e)$ was computed for the current $X$ and $\pi$, and hence no further queries are needed.

Since a round scans at most $n$ elements, each with $O(\log r)$ queries, a round uses $O(n\log r)$ queries in total.
\end{proof}

\begin{proof}[Proof of Theorem~\ref{thm:main}]
We run Algorithm~\ref{alg:weighted} with $\eps/3$ in place of $\eps$.
By Proposition~\ref{prop:approx}, it returns a $(1-\eps)$-approximate solution after $T=O(\eps^{-1}\log (r_1(E))+\eps^{-2})=O(\eps^{-1}\log r+\eps^{-2})$ rounds, where we used $r_1(E)\le 2r$.
By Lemma~\ref{lem:query}, the total number of independence queries is $O(n\log r\cdot(\eps^{-1}\log r+\eps^{-2}))$, plus $O(n)$ for the preprocessing in Section~\ref{sec:prelim} and for computing $r_1(E)$.
\end{proof}

\section*{Acknowledgments}

Our algorithm is entirely based on the unweighted matroid intersection algorithm of Huang--Kobayashi~\cite{huang2026auction}.
We are deeply grateful to Yusuke Kobayashi for sharing their result with us and for his many helpful comments that improved the clarity
and presentation of this paper.
We plan to combine this work with that of Huang--Kobayashi~\cite{huang2026auction} in a future version, providing a more polished and unified presentation.

This work was partially supported by the joint project of Kyoto University and Toyota Motor Corporation, titled ``Advanced Mathematical Science for Mobility Society'' and by JSPS KAKENHI Grant Number JP24KJ1494.

\section*{Acknowledgments of the use of AI}

Through discussions with GPT-5.6 Sol (OpenAI), we obtained an auction algorithm for weighted matroid intersection by combining the auction algorithm of Huang--Kobayashi~\cite{huang2026auction} for unweighted matroid intersection with the analysis of Liu--Ke--Khuller~\cite[Algorithm 1]{liu_et_al:LIPIcs.APPROX/RANDOM.2023.28} for weighted bipartite matching.
The latter analysis extends that of the auction algorithm of Assadi--Liu--Tarjan~\cite{assadi2021auction} for unweighted bipartite matching to the weighted setting.
Since the resulting analysis was hard to understand intuitively, we redid it using GPT-6 Astra (OpenAI).
We also used GPT-6 Astra and Claude Fable 5.1 (Anthropic) to assist in drafting the manuscript and improving its language and presentation.
We carefully verified and rewrote the proofs and take full responsibility for the paper.

\bibliography{biblio}

\appendix

\end{document}